\tikzset{>=stealth}
\tikzset{state/.append style={inner sep=2pt,minimum size=2pt}}
\newtheorem{definition}{Definition}
\newtheorem{lemma}{Lemma}
\newtheorem{theorem}{Theorem}
\newtheorem{example}{Example}
\newtheorem{corollary}{Corollary}
\newtheorem{proposition}{Proposition}
\newcolumntype{H}{>{\setbox0=\hbox\bgroup}c<{\egroup}@{}}
\renewcommand{\paragraph}[1]{\par\smallskip\noindent\textbf{#1}}
\newtheorem{problem}{Problem}
\newcommand{\ie}{i.\,e.\@\xspace}
\newcommand{\eg}{e.\,g.\@\xspace}
\newcommand{\prophesy}{\textrm{PROPhESY}\xspace}
\newcommand{\storm}{\textrm{Storm}\xspace}
\newcommand{\tool}[1]{\textrm{#1}\xspace}
\newcommand{\TO}{TO}
\newcommand{\MO}{MO}
\newcommand{\Err}{Err}
\newcommand{\polred}{\ensuremath{\leqslant_P}}
\newcommand{\remaining}{\ensuremath{\mathit{Remain}}}
\newcommand{\MC}{{D}}
\newcommand{\dtmc}{\MC}
\newcommand{\pDtmcInit}[1][]{\ensuremath{\dtmc{#1}=(S{#1},\sinit{#1},\Paramvar,P{#1})}}
\newcommand{\sinitMk}[1][\pomdp,k]{s_{\mathrm{I},#1}}
\newcommand{\VarMk}[1][\pomdp,k]{V_{#1}}
\newcommand{\pDtmcInitMk}[1][\pomdp,k]{\ensuremath{\dtmc_{#1} = (S_{#1},\sinitMk[#1],\VarMk[#1],P_{#1})}}
\newcommand{\VarMkSubs}[1][\pomdp,k]{V_{#1}^{\mathrm{subs}}}
\newcommand{\VarMkNext}[1][\pomdp,k]{V_{#1}^{\mathrm{next}}}
\newcommand{\VarMkRestr}[1][\pomdp,k]{V_{#1}^{\mathrm{restr}}}
\newcommand{\pDtmcInitMkSubs}[1][\pomdp,k]{\ensuremath{\dtmc_{#1}^{\mathrm{subs}} = (S_{#1},\sinitMk[#1],\VarMkSubs[#1],P_{#1}^{\mathrm{subs}})}}
\newcommand{\pDtmcInitMkNext}[1][\pomdp,k]{\ensuremath{\dtmc_{#1}^{\mathrm{next}} = (S_{#1},\allowbreak\sinitMk[#1],\allowbreak\VarMkNext[#1],\allowbreak P_{#1}^{\mathrm{next}})}}
\newcommand{\pDtmcInitMkRestr}[1][\pomdp,k]{\ensuremath{\dtmc_{#1}^{\mathrm{restr}} = (S_{#1},\allowbreak\sinitMk[#1],\allowbreak\VarMkRestr[#1],\allowbreak P_{#1}^{\mathrm{restr}})}}
\newcommand{\p}{\ensuremath{\mathbb{P}}}
\newcommand{\pr}{\ensuremath{\mathrm{Pr}}}
\newcommand{\R}{\mathbb{R}}
\newcommand{\Q}{\mathbb{Q}}
\newcommand{\N}{\mathbb{N}}
\newcommand{\FSCs}[2][]{\ensuremath{\mathit{FSC}_{#1}^{#2}}}
\newcommand{\PFSC}[3][]{\ensuremath{\mathit{PFSC}_{#1}^{#2}(#3)}}
\newcommand{\Ireal}{[0,\, 1]\subseteq\mathbb{R}}  
\newcommand{\Distr}{\mathit{Distr}}
\newcommand{\distDom}{X}
\newcommand{\distFunc}{\mu}
\newcommand{\distDomElem}{x}
\DeclareMathOperator{\supp}{supp}
\newcommand{\Until}{\mbox{$\, {\sf U}\,$}}
\newcommand{\Var}{\ensuremath{V}\xspace}        
\newcommand{\Paramvar}{\ensuremath{{V}}\xspace}        
\newcommand{\sinit}{s_{\mathrm{I}}} 
\newcommand{\ninit}{n_{\mathrm{I}}}
\newcommand{\mdp}{M}
\newcommand{\MdpInit}[1][]{\ensuremath{\mdp{#1}=(S{#1},\sinit,\Act,\probmdp{#1})}}
\newcommand{\pMdpInit}[1][]{\ensuremath{\mdp{#1}=(S{#1},\,\sinit{#1},\Act,\Var,\probmdp{#1})}}
\newcommand{\probmdp}{\mathcal{P}}
\newcommand{\probdtmc}{P}
\newcommand{\Strategy}{\sched} 
\newcommand{\DTMCgnd}{M^\Strategy}
\newcommand{\fsc}{\ensuremath{\mathcal{A}}}
\newcommand{\actionMap}{\ensuremath{\mathcal{\gamma}}}
\newcommand{\nodeTransition}{\ensuremath{\mathcal{\delta}}}
\newcommand{\FSCinit}{\fsc=(N,\ninit,\actionMap,\nodeTransition)}
\newcommand{\ObsSym}{{Z}}
\newcommand{\ObsFun}{{O}}
\newcommand{\obs}{\ensuremath{z}}
\newcommand{\PomdpInit}[1][]{\pomdp{#1}=(\mdp{#1},\ObsSym{#1},\ObsFun{#1})}
\newcommand{\pomdp}{\mathcal{M}}
\newcommand{\states}{\ensuremath{S}}
\newcommand{\poly}[1][]{\ensuremath{\mathbb{Q}[#1]}}
\newcommand{\pdtmc}{\ensuremath{\mathcal{P}}}
\newcommand{\sched}{\ensuremath{\sigma}}
\newcommand{\Sched}{\ensuremath{{\Sigma}}}
\newcommand{\osched}{\ensuremath{\mathit{\sigma}}}
\newcommand{\oSched}{\ensuremath{\Sigma}}
\newcommand{\Act}{\ensuremath{\mathit{Act}}}
\newcommand{\ActS}{\ensuremath{\mathit{A}}}
\newcommand{\act}{\ensuremath{a}}
\newcommand{\pmdp}{\ensuremath{M}}
\newcommand{\pathset}{\mathsf{Paths}}
\newcommand{\pathsfin}{\pathset_{\mathit{fin}}}
\newcommand{\last}[1]{\mathrm{last}(#1)}
\DeclareMathAlphabet{\mathpzc}{OT1}{pzc}{m}{it}
\def\presuper#1#2%
\newtheorem*{proof}{Proof}
\begin{document}
\title{Permissive Finite-State Controllers of POMDPs \\ using Parameter Synthesis%
    \thanks{\,\,Supported by the DFG RTG 2236 ``UnRAVeL''.}
}
\author{Sebastian Junges\textsuperscript{1}, Nils Jansen\textsuperscript{2},
        Ralf Wimmer\textsuperscript{3}, Tim Quatmann\textsuperscript{1}, \\
        \textbf{Leonore Winterer\textsuperscript{3}, Joost-Pieter Katoen\textsuperscript{1},
        and Bernd Becker\textsuperscript{3}} \\
  \textsuperscript{1}RWTH Aachen University, Aachen, Germany \\
  \textsuperscript{2}Radboud University, Nijmegen, The Netherlands \\
  \textsuperscript{3}Albert-Ludwigs-Universit\"at Freiburg, Freiburg im Breisgau, Germany
}
\maketitle
\begin{abstract}
  We study finite-state controllers (FSCs) for partially observable Markov decision processes (POMDPs) that are provably correct with respect to given specifications.
  The key insight is that computing (randomised) FSCs on POMDPs is equivalent to---and computationally as hard as---synthesis for parametric Markov chains (pMCs).
  This correspondence allows to use tools for parameter synthesis in pMCs to compute correct-by-construction FSCs on POMDPs for a variety of specifications.
  Our experimental evaluation shows comparable performance to well-known POMDP solvers.
\end{abstract}

\section{Introduction}
\label{sec:introduction}

\paragraph{Partially Observable MDPs.} We intend to provide guarantees for planning scenarios given by dynamical systems with uncertainties.
In particular, we want to synthesise a \emph{strategy} for an agent that ensures certain desired behaviour~\cite{howard1960dynamic}.
A popular formal model for planning subject to stochastic behaviour are Markov decision processes (MDPs)~\cite{Put94}.
An MDP is a nondeterministic model in which the agent chooses to perform an action under full knowledge of the environment it is operating in.
The outcome of the action is a probability distribution over the system states.
Many applications, however, allow only \emph{partial observability} of the current system state~\cite{kaelbling1998planning,thrun2005probabilistic,WongpiromsarnF12,DBLP:books/daglib/0023820}.
For such applications, MDPs are extended to \emph{partially observable Markov decision processes} (POMDPs).
While the agent acts within the environment, it encounters certain \emph{observations}, according to which it can infer the likelihood of the system being in a certain state.
This likelihood is called the \emph{belief state}.
Executing an action leads to an update of the belief state according to new observations.
The belief state together with an update function form a (typically uncountably infinite) MDP, referred to as the \emph{belief MDP}~\cite{ShaniPK13}.

\paragraph{The POMDP Synthesis Problem.} For (PO)MDPs, a \emph{randomised strategy} is a function that resolves the nondeterminism by
providing a probability distribution over actions at each time step.
In general, strategies depend on the full history of the current evolution of the (PO)MDP.
If a strategy depends only on the current state of the system, it is called \emph{memoryless}.
For MDPs, memoryless strategies suffice to induce optimal values according to our measures of interest~\cite{Put94}.
Contrarily, POMDPs require strategies taking the full observation history into account~\cite{Ross83}, \eg in case of infinite-horizon objectives.
Moreover, strategies inducing \emph{optimal} values are computed by assessing the entire belief
MDP~\cite{MadaniHC99,braziunas2003pomdp,szer2005optimal,NPZ17}, rendering the problem undecidable~\cite{ChatterjeeCT16}.

POMDP strategies can be represented by \emph{infinite-state controllers}.
For computational tractability, strategies are often restricted to finite memory; this amounts to using \emph{randomised finite-state controllers} (FSCs)~\cite{meuleau1999solving}.
We often refer to strategies as FSCs.
Already the computation of a memoryless strategy adhering to a specification is NP-hard, SQRT-SUM-hard, and in PSPACE~\cite{VlassisLB12}.
While optimal values cannot be guaranteed, a small amount of memory in combination with \emph{randomisation} may
superseed large memory in many cases~\cite{chatterjee2004trading,amato2010optimizing}.

\paragraph{Correct-by-Construction Strategy Computation.}
In this paper, we synthesise FSCs for POMDPs.
We require these FSCs to be provably correct for specifications such as indefinite-horizon properties like expected reward or reach-avoid probabilities.
State-of-the-art POMDP solvers mainly consider expected discounted reward measures~\cite{DBLP:conf/aaai/WalravenS17}, which are a subclass of indefinite horizon properties~\cite{DBLP:conf/uai/KolobovMW12}.

Our key observation is that for a POMDP the \emph{set of all FSCs} with a fixed memory bound can be succinctly represented by a \emph{parametric Markov chain} (pMC)~\cite{Daw04}.
Transitions of pMCs are given by functions over a finite set of parameters rather than constant probabilities.
The \emph{parameter synthesis} problem for pMCs is to determine parameter instantiations that satisfy (or refute) a given specification.
We show that the pMC parameter synthesis problem and the POMDP strategy synthesis problem are equally hard.
This correspondence not only yields complexity results~\cite{DBLP:journals/corr/abs-1709-02093}, but particularly enables using a plethora of methods for parameter synthesis implemented in sophisticated and optimised parameter
synthesis tools like \tool{PARAM}~\cite{param_sttt}, \tool{PRISM}~\cite{KNP11}, and \tool{PROPhESY}~\cite{DJJ+15}.
As our experiments show, they turn out to be competitive alternatives to dedicated POMDP solvers.
Moreover, as we are solving slightly different problems, our methods are orthogonal to,
\eg, PRISM-POMDP~\cite{NPZ17} and solve-POMDP~\cite{DBLP:conf/aaai/WalravenS17}.

\begin{table}[tb]
\caption{Correspondence}
\centering\small
\label{tab:great_table_of_correspondence}
\begin{tabular}{|p{0.24\textwidth}l|}
\hline
POMDP under FSC        & pMC                      \\ \hline\hline
states $\times$ memory & states                   \\
same observation       & same parameter           \\
strategy               & parameter instantiation  \\
\hline
\end{tabular}
\end{table}
We detail our contributions and the structure of the paper, which starts with necessary formalisms in Sect.~\ref{sec:preliminaries}.

\begin{description}[labelindent=0cm,leftmargin=0cm, noitemsep,nolistsep]
\item[Section~\ref{sec:verification}:] We establish the correspondence of POMDPs and pMCs, see Tab.~\ref{tab:great_table_of_correspondence}.
The product of a POMDP and an FSC yields a POMDP with state-memory pairs, which we map to states in the pMC.
If POMDP states share \emph{observations}, the corresponding pMC states share \emph{parameters} at emanating transitions.
A \emph{strategy} of the POMDP corresponds to a \emph{parameter instantiation} in the pMC.

\item[Section~\ref{sec:equivalence}:] We show the opposite direction, namely a transformation from pMCs to POMDPs.
This result establishes that the (controller and parameter, respectively) synthesis problems for POMDPs and pMCs are equally hard.
Technically, we identify the practically relevant class of \emph{simple pMCs}, which coincides with POMDPs under memoryless strategies.

\item[Section~\ref{sec:equivalence:restrictions}:] Typical restrictions on parameter instantiations concern whether parameters may be assigned the probability zero.
We discuss effects of such restrictions to the resulting POMDP strategies.

\item[Section~\ref{sec:alternative_fscs}:] Specific types of FSCs differ in the information they take into account, \eg the last action that has been taken by an agent.
We compare existing definitions from the literature and discuss their effect in our setting.

\item[Section~\ref{sec:permissive}:] We lift the definition of permissive strategies to FSCs, and describe how the synthesis of an important class of permissive strategies is naturally cast into a well-known parameter synthesis problem.

\item[Section~\ref{sec:evaluation}:] We evaluate the computation of correct-by-construction FSCs using pMC synthesis techniques.
To that end, we explain how particular parameter synthesis approaches deliver optimal or near-optimal FSCs.
Then, we evaluate the approach on a range of typical POMDP benchmarks.
We observe that often a small amount of memory suffices. Our approach is competitive to state-of-the-art POMDP solvers and is able to synthesise small, almost-optimal FSCs.
\end{description}

\paragraph{Related Work.}
In addition to the cited works, \cite{meuleau1999solving} uses a branch-\&-bound method to find optimal FSCs for POMDPs.
A SAT-based approach computes FSCs for qualitative properties~\cite{DBLP:conf/aaai/ChatterjeeCD16}.
For a survey of decidability results and algorithms for broader classes of properties refer to~\cite{ChatterjeeCT16,ChatterjeeCGK16}.
Work on parameter synthesis~\cite{DBLP:journals/corr/abs-1709-02093,DBLP:conf/icse/FilieriGT11}
might contain additions to the methods considered here.

\section{Preliminaries}
\label{sec:preliminaries}
\noindent The set $[k]$ denotes $\{0, \hdots, k\} \subseteq \N$.

A \emph{probability distribution} over a finite or countably infinite set $\distDom$
is a function $\distFunc\colon\distDom\rightarrow\Ireal$ with $\sum_{\distDomElem\in\distDom}\distFunc(\distDomElem)=\distFunc(\distDom)=1$.
The set of all distributions on $\distDom$ is $\Distr(\distDom)$. The support of a distribution $\distFunc$ is
$\supp(\distFunc) = \{x\in\distDom\,|\,\distFunc(x)>0\}$.
A distribution is \emph{Dirac} if $|\!\supp(\distFunc)| = 1$.

Let $\Var=\{p_1,\ldots,p_n\}$ be a finite set of \emph{parameters} over the domain $\R$ and let $\poly[\Var]$ be the set of multivariate polynomials over $\Var$.
An \emph{instantiation} for $\Var$ is a function $u \colon \Var \to \R$.
Replacing each parameter $p$ in a polynomial $f \in \poly[\Var]$ by $u(p)$ yields $f[u] \in \R$.

Decision problems can be considered as languages describing all positive instances.
A language $L_1 \subseteq \{0,1\}^{*}$ is \emph{polynomial (many-one or Karp) reducible} to $L_2 \subseteq \{0,1\}^{*}$, written $L_1 \polred L_2$,
if there exists a polynomial-time computable function $f\colon \{0,1\}^{*} \rightarrow \{0,1\}^{*}$
such that for all $w \in \{0,1\}^{*}$, $w \in L_1 \iff f(w) \in L_2$.
Polynomial reductions are essential to define complexity classes, cf.~\cite{DBLP:books/daglib/0072413}.

\subsection{Parametric Markov Models}
\label{ssec:prob_models}

\begin{definition}[pMDP]
  \label{def:pmdp}
  A \emph{parametric Markov decision process} (pMDP) $\pmdp$ is a tuple $\pMdpInit$ with a finite
  (or countably infinite) set $\states$ of \emph{states}, \emph{initial state} $\sinit\in\states$, a finite set
  $\Act$ of \emph{actions}, a finite set $\Var$ of parameters, and a \emph{transition function}
  $\probmdp\colon \states\times\Act\times\states\rightarrow \poly[\Var]$.
\end{definition}
The \emph{available actions} in $s\in\states$ are $\ActS(s)=\{\act\in\Act\mid \exists s'\in\states: \probmdp(s,\act,s')\neq 0\}$.
W.\,l.\,o.\,g.\ we assume $\forall s,s'\in S.\,\forall \act \in \Act.\, \bigl(P(s,\act,s') \neq 0 \land P(s,\act,s') \neq 1\bigr)
\ \Rightarrow\ \bigl(\exists s''\neq s'.\,P(s,a,s'') \neq 0\bigr)$.
We assume that pMDP $\mdp$ contains no deadlock states, \ie $\ActS(s)\neq\emptyset$ for all $s\in\states$.
A \emph{path} of a pMDP $\mdp$ is an (in)finite sequence $\pi = s_0\xrightarrow{\act_0}s_1\xrightarrow{\act_1}\cdots$,
where $s_0=\sinit$, $s_i\in\states$, $\act_i\in\ActS(s_i)$, and $\probmdp(s_i,\act_i,s_{i+1})\neq 0$ for all $i\in\N$.
For finite $\pi$, $\last{\pi}$ denotes
the last state of $\pi$. The set of (in)finite paths of $\mdp$ is $\pathsfin^{\mdp}$ ($\pathset^{\mdp}$).
\begin{definition}[MDP]
  \label{def:mdp}
  A \emph{Markov decision process (MDP)} is a pMDP where $\probmdp \colon\states \times \Act \times \states \rightarrow [0,1] \subseteq \R$
  and  for all $s \in \states$ and $\act \in \ActS(s)$, $\sum_{s'\in\states}\probmdp(s,\act,s') = 1$.
\end{definition}

A \emph{(parametric) discrete-time Markov chain} ((p)MC) is a (p)MDP with $|\ActS(s)|=1$ for all $s\in\states$.
For a (p)MC $\dtmc$, we may omit the actions and use the notation $\pDtmcInit$ with a transition function $\probdtmc$
of the form $\probdtmc\colon\states\times\states\rightarrow \poly[\Var]$.

Applying an \emph{instantiation} $u\colon\Var\to\R$ to a pMDP or pMC $\mdp$, denoted $\mdp[u]$, replaces each polynomial
$f$ in $\mdp$ by $f[u]$. $\mdp[u]$ is also called the \emph{instantiation} of $\mdp$ at $u$.
Instantiation $u$ is \emph{well-defined} for $\mdp$ if the replacement yields probability distributions, \ie
if $\mdp[u]$ is an MDP or an MC, respectively.

\paragraph{Strategies.}
To resolve the nondeterministic action choices in MDPs, so-called \emph{strategies} determine at each state a distribution over actions to take.
This decision may be based on the \emph{history} of the current path.
\begin{definition}[Strategy]
  \label{def:strategy}
  A \emph{strategy} $\sched$ for (p)MDP $\mdp$ is a function $\sched\colon \pathsfin^{\mdp}\to\Distr(\Act)$
  such that $\supp\bigl(\sched(\pi)\bigr) \subseteq \Act\bigl(\last{\pi}\bigr)$ for all $\pi\in \pathsfin^{\mdp}$.
  The set of all strategies of $\mdp$ is $\Sched^{\mdp}$.
\end{definition}
A strategy $\sched$ is \emph{memoryless} if $\last{\pi}=\last{\pi'}$ implies $\sched(\pi)=\sched(\pi')$ for all $\pi,\pi'\in\pathsfin^{\mdp}$.
It is \emph{deterministic} if $\sched(\pi)$ is a Dirac distribution for all $\pi\in\pathsfin^{\mdp}$.
A strategy that is not deterministic is \emph{randomised}.

A strategy~$\sigma$ for an MDP $\mdp$ resolves all nondeterministic choices, yielding an \emph{induced Markov chain} $\DTMCgnd$,
for which a \emph{probability measure} over infinite paths is defined by the cylinder set construction~\cite{BK08}.

\begin{definition}[Induced Markov Chain]
  \label{def:induced_dtmc}
 For an MDP $\MdpInit$ and a strategy $\sched\in\Sched^{\mdp}$,  the MC \emph{induced by  $\mdp$ and $\sched$} is given by $\DTMCgnd = (\pathsfin^{\mdp},\sinit,\probdtmc^{\sched})$ where:
  \[
    \probdtmc^{\sched}(\pi,\pi') = \begin{cases}
        \probmdp(\last{\pi},\act,s')\cdot\sched(\pi)(\act) & \text{if $\pi' = \pi\act s'$,} \\
        0 & \text{otherwise.}
      \end{cases}
  \]
\end{definition}

\subsection{Partial Observability}
\label{ssec:partial_obs}
\begin{definition}[POMDP]
  \label{def:pomdp}
  A \emph{partially observable MDP (POMDP)} is a tuple $\PomdpInit$, with $\MdpInit$ the \emph{underlying MDP of $\pomdp$}, $\ObsSym$ a finite set of observations and $\ObsFun\colon\states\rightarrow\ObsSym$ the \emph{observation function}.
\end{definition}
We require that states with the same observations have the same set of
enabled actions, \ie $\ObsFun(s)=\ObsFun(s')$ implies $\ActS(s)=\ActS(s')$ for all $s,s'\in\states$.
We define $\ActS(z) = \ActS(s)$ if $\ObsFun(s) = z$.
More general observation functions~\cite{RoyGT05,ShaniPK13} take the last action into account and provide a distribution
over $\ObsSym$. There is a transformation of the general case to the POMDP definition used here
that blows up the state space polynomially~\cite{ChatterjeeCGK16}.
In Fig.~\ref{fig:pomdpfragment}, a fragment of the underlying MDP of a POMDP has two different observations,
indicated by the state colouring.

We lift the observation function to paths: For
$\pi=s_0\xrightarrow{\act_0} s_1\xrightarrow{\act_1}\cdots s_n\in\pathsfin^{\mdp}$, the associated
\emph{observation sequence} is $\ObsFun(\pi)=\ObsFun(s_0)\xrightarrow{\act_0} \ObsFun(s_1)\xrightarrow{\act_1}\cdots\ObsFun(s_n)$.
Several paths in the underlying MDP may yield the same observation sequence.
Strategies have to take this restricted observability into account.
\begin{definition}
  \label{def:obsstrategy}
  An \emph{observation-based strategy} $\osched$ for a POMDP $\pomdp$ is a strategy
  for the underlying MDP $\mdp$
  such that $\osched(\pi)=\osched(\pi')$ for all $\pi,\pi'\in\pathsfin^{\mdp}$
  with $\ObsFun(\pi)=\ObsFun(\pi')$.
  $\oSched^\pomdp$ is the set of observation-based strategies for $\pomdp$.
\end{definition}
An observation-based strategy selects actions based on observations along a path and the past actions.
Applying the strategy to a POMDP yields an induced MC as in Def.~\ref{def:induced_dtmc}, resolving all nondeterminism and partial observability.
To represent observation-based strategies with finite memory, we define \emph{finite-state controllers} (FSCs).
We discuss alternative definitions from the literature in Sect.~\ref{sec:alternative_fscs}.
A randomised observation-based strategy for a POMDP $\pomdp$ with (finite) $k$ memory is represented by an FSC $\fsc$ with $k$ memory nodes.
If $k=1$, the FSC describes a \emph{memoryless strategy}.
We often refer to observation-based strategies as FSCs.
\begin{figure}[tb]
  \centering
  \subfigure[POMDP $\pomdp$]{
    \label{fig:pomdpfragment}
    \scalebox{0.65}{
      \begin{tikzpicture}
\node[state] (s1) {$s_1$};
\node[fill, circle, inner sep=2pt, right=0.4cm of s1] (a1) {};
\node[fill, circle, inner sep=2pt, left=0.4cm of s1] (a2) {};

\node[state,right=1.2cm of s1, yshift=0.5cm, fill=red!40] (s2) {$s_2$};
\node[state,right=1.2cm of s1, yshift=-0.5cm] (s3) {$s_3$};

\node[state,left=1.2cm of s1, yshift=-0.5cm, fill=red!40] (s4) {$s_4$};
\node[state,left=1.2cm of s1, yshift=0.5cm, fill=red!40] (s5) {$s_5$};
\draw ($(s1.north) + (0,0.3)$) edge[->] (s1);
\draw[] (s1) edge node[above] {$\act_1$} (a1);

\draw[->] (a1) edge node[above] {\scriptsize$0.6$} (s2);
\draw[->] (a1) edge node[below] {\scriptsize$0.4$} (s3);

\draw[] (s1) edge node[above] {$\act_2$} (a2);

\draw[->] (a2) edge node[below] {\scriptsize$0.7$} (s4);
\draw[->] (a2) edge node[above] {\scriptsize$0.3$} (s5);
\end{tikzpicture}
    }
  } \\
  \subfigure[FSC $\fsc$]{
    \label{fig:fsc_for_pomdp}
    \scalebox{0.5}{
      \begin{tikzpicture}
	\node[state] (s1) {$\langle n_1 \rangle$} ;
	
	\node[state,below=3.2cm of s1] (s3) {$\langle n_2\rangle$} ;
	\node[fill, circle, inner sep=2pt, below=of s1, xshift=-0.6cm] (n1q1) {};
 	
	\node[fill, circle, inner sep=2pt, below=of s1, xshift=0.6cm] (n1q2) {};
 	
	\node[draw,rectangle, below=of n1q1, xshift=-1.5cm] (q1a1) {$\act_1$};
	\node[draw,rectangle, below=of n1q1] (q1a2) {$\act_2$};
	\node[draw,rectangle, below=of n1q2] (q2a1) {$\act_1$};
	\node[draw,rectangle, below=of n1q2, xshift=1.5cm] (q2a2) {$\act_2$};
	
	\node[fill, circle, inner sep=2pt, left=0.5cm of q1a1] (q1a1dist) {};
	
	\node[fill, circle, inner sep=2pt, left=0.5cm of q1a2] (q1a2dist) {};

	\node[fill, circle, inner sep=2pt, right=0.5cm of q2a1] (q2a1dist) {};

	\node[fill, circle, inner sep=2pt, right=0.5cm of q2a2] (q2a2dist) {};
	\draw ($(s1.north) + (0,0.3)$) edge[->] (s1);
	
	\draw[] (s1) edge node[left] {$\obs_0$?} (n1q1);
	\draw[] (s1) edge node[right] {$\obs_1$?} (n1q2);
	
	\draw[->] (n1q1) edge node[auto] {} (q1a1);
	\draw[->] (n1q1) edge node[auto] {} (q1a2);
	\draw[->] (n1q2) edge node[auto] {} (q2a1);
	\draw[->] (n1q2) edge node[auto] {} (q2a2);
	
	\draw[] (q1a1) edge (q1a1dist);
	\draw[] (q1a2) edge (q1a2dist);
	\draw[] (q2a1) edge (q2a1dist);
	\draw[] (q2a2) edge (q2a2dist);
	
	\draw[->] (q1a1dist) edge[bend left=50] node[auto] {} (s1);
	\draw[->] (q1a1dist) edge[bend right=40] node[auto] {} (s3);
	
	\draw[->] (q1a2dist) edge[bend left=50] node[auto] {} (s1);
	\draw[->] (q1a2dist) edge[bend right=40] node[auto] {}  (s3);
	
	\draw[->] (q2a1dist) edge[bend right=50] node[auto] {}  (s1);
	\draw[->] (q2a1dist) edge[bend left=40] node[auto] {}  (s3);
	
	\draw[->] (q2a2dist) edge[bend right=50] node[auto] {}  (s1);
	\draw[->] (q2a2dist) edge[bend left=40] node[auto] {}  (s3);
	
	\end{tikzpicture}
    }
  }
  \subfigure[Induced MC $\pomdp^{\osched_\fsc}$]{
    \label{fig:induced_mc_fsc_pomdp}
    \scalebox{0.6}{
      \begin{tikzpicture}
\node[] (s1) {$\langle s_1,n_1 \rangle$};
 
\node[right=1cm of s1, yshift=1.8cm] (s2n1) {$\langle s_2,n_1 \rangle$};
\node[right=1cm of s1, yshift=0.6cm] (s2n2) {$\langle s_2,n_2 \rangle$};
\node[right=1cm of s1, yshift=-0.6cm] (s3n1) {$\langle s_3,n_1 \rangle$};
\node[right=1cm of s1, yshift=-1.8cm] (s3n2) {$\langle s_3,n_2 \rangle$};

\node[left=1cm of s1, yshift=1.8cm] (s5n1) {$\langle s_5,n_1 \rangle$};
\node[left=1cm of s1, yshift=0.6cm] (s5n2) {$\langle s_5,n_2 \rangle$};
\node[left=1cm of s1, yshift=-0.6cm] (s4n1) {$\langle s_4,n_1 \rangle$};
\node[left=1cm of s1, yshift=-1.8cm] (s4n2) {$\langle s_4,n_2 \rangle$};

\draw ($(s1.north) + (0,0.3)$) edge[->] (s1);

\draw[->] (s1) edge node[above=0.3cm] {\scriptsize$0.15$} (s2n1);
\draw[->] (s1) edge node[above] {\scriptsize$0.15$} (s2n2);
\draw[->] (s1) edge node[below] {\scriptsize$0.1$} (s3n1);
\draw[->] (s1) edge node[below=0.3cm] {\scriptsize$0.1$} (s3n2);

\draw[->] (s1) edge node[above=0.3cm] {\scriptsize$0.075$} (s5n1);
\draw[->] (s1) edge node[above] {\scriptsize$0.075$} (s5n2);
\draw[->] (s1) edge node[below] {\scriptsize$0.175$} (s4n1);
\draw[->] (s1) edge node[below=0.3cm] {\scriptsize$0.175$} (s4n2);

\end{tikzpicture}
    }
  }
  \caption{(a)~The POMDP~$\pomdp$ with observations $\ObsFun(s_1)=\ObsFun(s_3)=\obs_0$ (white)
    and $\ObsFun(s_2)=\ObsFun(s_4)=\ObsFun(s_5)=\obs_1$ (red).
    (b)~The associated (partial) FSC~$\fsc$ has two memory nodes.
    (c)~A part of MC~$\pomdp^{\osched_\fsc}$ induced by $\pomdp$ and $\fsc$.
  }
  \label{fig:pomdp_and_fsc}
\end{figure}
\begin{definition}[FSC]
  \label{def:fsc}
  A \emph{finite-state controller (FSC)} for a POMDP $\pomdp$ is a tuple $\FSCinit$, where $N$ is a finite set of \emph{memory nodes},
  $\ninit\in N$ is the \emph{initial memory node}, $\actionMap$ is the \emph{action mapping} $\actionMap\colon N\times\ObsSym\rightarrow\Distr(\Act)$,
  and $\nodeTransition$ is the \emph{memory update} $\nodeTransition\colon N\times\ObsSym\times\Act\rightarrow \Distr(N)$.
  The set $\FSCs[k]{\pomdp}$ denotes the set of FSCs with $k$ memory nodes, called \emph{$k$-FSC}s.
  Let $\osched_\fsc\in\oSched^\pomdp$ denote the observation-based strategy represented by $\fsc$.
\end{definition}
From a node $n$ and the observation $\obs$ in the current state of the POMDP, the next action $\act$ is
chosen from $\ActS(\obs)$ randomly as given by $\actionMap(n, \obs)$. Then the successor node of the FSC is determined randomly via $\nodeTransition(n, \obs, \act)$.

\begin{example}
  \label{ex:constant_induced_mc}
  Fig.~\ref{fig:fsc_for_pomdp} shows an excerpt of an FSC $\fsc$
  with two memory nodes.
  From node $n_1$, the action mapping distinguishes observations $\obs_0$ and $\obs_1$.
  The solid dots indicate a probability distribution from $\Distr(\Act)$.
  For readability, all distributions are uniform and we omit the action mapping for node $n_2$.

  Now recall the POMDP $\pomdp$ from Fig.~\ref{fig:pomdpfragment}.
  The induced MC~$\pomdp^{\osched_\fsc}$ is shown in Fig.~\ref{fig:induced_mc_fsc_pomdp}.
  Assume $\pomdp$ is in state $s_1$ and $\fsc$ in node $n_1$.
  Based on the observation $\obs_0\colonequals\ObsFun(s_1)$, $\osched_\fsc$ chooses action $\act_1$ with probability $\nodeTransition(n_1,\obs_0)(\act_1)=0.5$  leading to the probabilistic branching in the POMDP.
  With probability $0.6$, $\pomdp$ evolves to state $s_2$.
  Next, the FSC $\fsc$ updates its memory node; with probability $\nodeTransition(n_1, \obs_0, \act_1)(n_1)=0.5$, $\fsc$ stays in $n_1$.
  The corresponding transition from $\langle s_1,n_1\rangle$ to $\langle s_2,n_1\rangle$
  in  $\pomdp^{\osched_\fsc}$ has probability $0.5\cdot 0.6\cdot 0.5=0.15$.
\end{example}

\subsection{Specifications}
\label{ssec:spec}

For a POMDP $\pomdp$, a set $G\subseteq\states$ of \emph{goal states}, a set $B\subseteq\states$ of \emph{bad states}, and a \emph{threshold} $\lambda \in [0,1)$, we consider \emph{quantitative reach-avoid specifications}
$\varphi=\p_{> \lambda} (\neg B\,\Until\,G)$. The specification $\varphi$ is satisfied for a
strategy $\osched\in\oSched^\pomdp$ if the probability $\pr^{\pomdp^\osched}(\neg B\,\Until\,G)$ of reaching a goal state in $\pomdp^\osched$
without entering a bad state in between exceeds $\lambda$, denoted by $\pomdp^\osched\models\varphi$.
The task is to compute such a strategy provided that one exists. For an MDP $\mdp$,
there is a memoryless deterministic strategy inducing the maximal probability $\pr_{\max}^\mdp(\neg B\,\Until\,G)$~\cite{Condon92}.
For a POMDP $\pomdp$, however, observation-based strategies with infinite memory as in Def.~\ref{def:obsstrategy} are
necessary~\cite{Ross83} to attain $\pr_{\max}^\pomdp(\neg B\,\Until\,G)$.
The problem of proving the satisfaction of $\varphi$ is therefore undecidable~\cite{ChatterjeeCT16}.
In our experiments, we also use \emph{undiscounted expected reachability reward properties}~\cite{BK08}.

\section{From POMDPs to pMCs}
\label{sec:verification}
Our goal is to make pMC synthesis methods available for POMDPs.
In this section we provide a transformation from a POMDP $\pomdp$ to a pMC $\dtmc$.
We consider the following decision~problems.
\begin{problem}[\textsl{$\exists$k-FSC}]
Given a POMDP $\pomdp$, a specification $\varphi$, and a (unary encoded) memory bound $k>0$, is there a $k$-FSC $\fsc$ with $\pomdp^{\osched_\fsc} \models \varphi$?
\end{problem}
\begin{problem}[\textsl{$\exists$INST}]
	For a pMC $\dtmc$ and a specification $\varphi$,
	does a well-defined instantiation $u$ exist such that $\dtmc[u]  \models \varphi$?
\end{problem}
\begin{theorem}
	$\exists k$-FSC $\polred$  $\exists$INST.
\end{theorem}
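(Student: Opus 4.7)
The plan is to mirror Table~\ref{tab:great_table_of_correspondence} by constructing, from a given POMDP $\pomdp$, specification $\varphi$, and unary memory bound $k$, a pMC $\dtmc_{\pomdp,k}$ and a lifted specification $\varphi'$ such that $\pomdp$ admits a $k$-FSC satisfying $\varphi$ iff $\dtmc_{\pomdp,k}$ admits a well-defined instantiation satisfying $\varphi'$. The construction is a direct polynomial-time encoding; the subtlety lies in turning an \emph{arbitrary} well-defined instantiation back into a genuine FSC.

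\textbf{Construction.} Let $\pomdp = (\mdp, \ObsSym, \ObsFun)$ with $\mdp = (S, \sinit, \Act, P)$, and fix a memory set $N$ with $|N| = k$ and initial node $\ninit$. I introduce two families of parameters: $p_{n,z,a}$ for each $(n,z,a) \in N \times \ObsSym \times \ActS(z)$ encoding the action mapping, and $q_{n,z,a,n'}$ for each $(n,z,a,n') \in N \times \ObsSym \times \ActS(z) \times N$ encoding the memory update. The pMC $\dtmc_{\pomdp,k}$ has state space $S \times N$, initial state $\langle \sinit, \ninit \rangle$, and transition function
\[
P'\bigl(\langle s,n\rangle, \langle s',n'\rangle\bigr) \;=\; \sum_{a \in \ActS(\ObsFun(s))} p_{n,\ObsFun(s),a} \cdot P(s,a,s') \cdot q_{n,\ObsFun(s),a,n'}.
\]
Crucially, the parameters are indexed by $\ObsFun(s)$ rather than by $s$, so states sharing an observation share parameters. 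For $\varphi = \p_{>\lambda}(\neg B \Until G)$ the lifted specification is $\varphi' = \p_{>\lambda}(\neg (B \times N) \Until (G \times N))$. The resulting size is polynomial in $|S|,|\ObsSym|,|\Act|,k$.

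\textbf{Forward direction.} Given $\fsc = (N, \ninit, \actionMap, \nodeTransition)$, define $u_\fsc$ by $u_\fsc(p_{n,z,a}) = \actionMap(n,z)(a)$ and $u_\fsc(q_{n,z,a,n'}) = \nodeTransition(n,z,a)(n')$. Since $\actionMap(n,z)$ and $\nodeTransition(n,z,a)$ are probability distributions, a short calculation summing the displayed expression over $(s',n')$ yields $1$, so $u_\fsc$ is well-defined. The product form of $P'$ exactly mimics how Def.~\ref{def:induced_dtmc} composes a POMDP transition with an FSC's action choice and memory update, so $\dtmc_{\pomdp,k}[u_\fsc]$ is bisimilar to $\pomdp^{\osched_\fsc}$ on the lifted goal and bad sets. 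Hence $\pomdp^{\osched_\fsc} \models \varphi$ iff $\dtmc_{\pomdp,k}[u_\fsc] \models \varphi'$.

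\textbf{Backward direction and main obstacle.} The hard part is that well-definedness of an instantiation $u$ only forces the aggregate identity $\sum_{a} u(p_{n,z,a}) \cdot \sum_{n'} u(q_{n,z,a,n'}) = 1$ at every state $\langle s,n\rangle$ with $\ObsFun(s) = z$, which does not individually require each block of parameters to represent a distribution. To extract an FSC I would normalize: set $c_a \colonequals \sum_{n'} u(q_{n,z,a,n'})$, define $\actionMap(n,z)(a) \colonequals u(p_{n,z,a}) \cdot c_a$, and set $\nodeTransition(n,z,a)(n') \colonequals u(q_{n,z,a,n'})/c_a$ when $c_a > 0$ (taking an arbitrary distribution otherwise, as such actions then contribute nothing to any pMC transition). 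Pointwise the product $\actionMap(n,z)(a)\cdot\nodeTransition(n,z,a)(n')$ equals $u(p_{n,z,a})\cdot u(q_{n,z,a,n'})$, so every transition of $\dtmc_{\pomdp,k}$ is preserved and $\pomdp^{\osched_\fsc} = \dtmc_{\pomdp,k}[u]$. The remaining bookkeeping is a sign/interval argument showing that the normalized $\actionMap$ and $\nodeTransition$ land in $[0,1]$; this can be discharged either by restricting without loss of generality to instantiations with $u(p_{n,z,a}), u(q_{n,z,a,n'}) \in [0,1]$, or by shaping the polynomial form of $\dtmc_{\pomdp,k}$ so that well-definedness already forces non-negativity, ideas the paper revisits when isolating the class of simple pMCs in Sect.~\ref{sec:equivalence}.
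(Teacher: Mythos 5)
Your reduction has the same skeleton as the paper's: product state space $S\times N$, parameters shared among states with equal observations, and a lifted specification; the forward direction (every $k$-FSC yields a well-defined satisfying instantiation) is fine. The genuine gap is the backward direction, and the two repairs you sketch do not close it. Well-definedness of your pMC only constrains the \emph{instantiated transition values} to form distributions; it neither forces the blocks $(u(p_{n,z,a}))_a$ and $(u(q_{n,z,a,n'}))_{n'}$ to be distributions, nor even forces the induced MC to be a \emph{convex} combination of the action distributions --- affine combinations with negative weights can slip through. Concretely: let $s_0$ (one observation, $k=1$) have actions $a_1$ with successor probabilities $0.6/0.4$ to absorbing states $s_1/s_2$ and $a_2$ with $0.4/0.6$, and take $\varphi=\p_{>0.9}(\neg B\,\Until\,G)$ with $G=\{s_1\}$, $B=\{s_2\}$. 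No strategy at all, hence no $k$-FSC, exceeds $0.6$; yet setting $u(p_{a_1})=3$, $u(p_{a_2})=-2$ and all $q$-parameters to $1$ gives transition values $3\cdot 0.6-2\cdot 0.4=1$ and $3\cdot 0.4-2\cdot 0.6=0$, a perfectly well-defined instantiation of your pMC that satisfies the lifted specification. So your map sends a no-instance of $\exists k$-FSC to a yes-instance of $\exists$INST, and no normalization can fix this, because no FSC induces that MC. In particular, restricting ``w.l.o.g.''\ to instantiations with values in $[0,1]$ is not a w.l.o.g.: $\exists$INST quantifies over \emph{all} well-defined instantiations, and the restriction changes the answer of the instance you constructed.

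The paper does not leave this to a side remark: the normalization is built into the parameterization itself. In Def.~\ref{def:pomdp_to_pmc_direct} one designated action per observation and one designated memory node receive the residual probabilities $1-\sum p^{z,n}_b$ and $1-\sum q^{z,n}_{\act,\bar n}$, so the blockwise sum-to-one constraints hold identically in the parameters, and the Correspondence Theorem (Thm.~\ref{theorem:great_theorem_of_correspondence}) then identifies well-defined instantiations with $k$-FSCs, from which the reduction and its correctness follow (together with the parameter-count lemma for polynomiality). Your second suggested repair --- ``shaping the polynomial form so that well-definedness forces distributions'' --- is precisely this missing construction (and is pushed further in Sect.~\ref{sec:equivalence}, where the simple-pMC encoding makes parameters appear directly as transition probabilities, so the $[0,1]$ range constraint also comes for free); it is the core of the proof, not bookkeeping, and your over-parameterised $p\cdot q$ form with separate free parameters for every action and memory successor is exactly what makes it fail.
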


The remainder of the section outlines the proof,
the converse direction is addressed in Sect.~\ref{sec:equivalence}. Consider a POMDP $\pomdp$, a specification $\varphi$, and a memory bound $k>0$ for which $\exists k$-FSC is to be solved.
The degrees of freedom to select a $k$-FSC are given by the possible choices for the action-mapping $\actionMap$ and the memory update $\nodeTransition$.
For each $\actionMap$ and $\nodeTransition$, we get a different induced MC, but these MCs are \emph{structurally similar} and can be represented by a single pMC.

\begin{figure}[t]
  \centering
  \subfigure[\small Induced pMC]{%
    \label{fig:example_fsc:applied_our_current}
    \scalebox{0.9}{\begin{tikzpicture}
  [every node/.append style={font=\scriptsize}]
\node[] (s1) {$\langle s_1, n_1 \rangle$};
\node[fill, circle, inner sep=2pt, right=0.8cm of s1] (a1) {};
\draw[->] (s1) edge node[above, pos=0.4] {$p$} (a1);

\node[fill, rectangle, inner sep=2pt,above=0.9cm of a1] (x1) {};

\node[fill, rectangle, inner sep=2pt,below=0.9cm of a1] (x2) {};

\draw[->] (a1) edge node[left, pos=0.6] {$0.6$} (x1);
\draw[->] (a1) edge node[left, pos=0.6] {$0.4$} (x2);

\draw ($(s1.west) + (-0.3,0)$) edge[->] (s1);

	\node[right=2cm of s1, yshift=1.5cm] (s2) {$\langle s_2, n_1 \rangle$};
	
	\node[right=2cm of s1, yshift=0.5cm] (s3) {$\langle s_2, n_2 \rangle$};
	\node[right=2cm of s1, yshift=-0.5cm] (s4) {$\langle s_3, n_1 \rangle$};
	
	\node[right=2cm of s1, yshift=-1.5cm] (s5) {$\langle s_3, n_2 \rangle$};
	
\draw[->] (x1) edge node[above] {$q_1$} (s2);
\draw[->] (x1) edge node[below] {$1{-}q_1$} (s3);

\draw[->] (x2) edge node[above] {$q_1$} (s4);
\draw[->] (x2) edge node[below] {$1{-}q_1$} (s5);
\end{tikzpicture}}
  }
  \subfigure[\small Parameterised transition probabilities]{%
    \scalebox{0.9}{\scriptsize
\begin{tabular}[b]{@{}|l|l|l|l|}
\hline
$\Act$                        & $\probmdp$                 & Node   & Result \\ \hline\hline
\multirow{4}{*}{$\act_1\colon p$}  & \multirow{2}{*}{0.6} & $n_1\colon q_1$ &  $0.6 \cdot p \cdot q_1$\\ \cline{3-4} 
                     &                     & $n_2\colon 1-q_1$ 				&   $0.6 \cdot p \cdot (1-q_1)$    	\\ \cline{2-4} 
                     & \multirow{2}{*}{0.4} & $n_1\colon q_1$ 				&  $0.4 \cdot p \cdot q_1$     		\\ \cline{3-4} 
                     &                      & $n_2\colon 1-q_1$ 				&  $0.4 \cdot p \cdot (1-q_1)$ 		\\ \cline{1-4} 
\multirow{4}{*}{$\act_2\colon 1-p$} & \multirow{2}{*}{0.7} & $n_1\colon q_2$ 			&  $0.7 \cdot (1-p) \cdot q_2$  		\\ \cline{3-4} 
                     &                      & $n_2\colon 1-q_2$ 				& $0.7 \cdot (1-p) \cdot (1-q_2)$	\\ \cline{2-4} 
                     & \multirow{2}{*}{0.3} & $n_1\colon q_2$ 				&  $0.3 \cdot (1-p) \cdot q_2$      	\\ \cline{3-4} 
                     &                      & $n_2\colon 1-q_2$ 				&  $0.3 \cdot (1-p) \cdot (1-q_2)$ \\ \hline
\end{tabular}
}
    \label{tab:fsc_example_tab}
  }
\caption{Induced parametric Markov chain for FSCs.}
\label{fig:induced_pmc_example}
\end{figure}
\begin{example}
  \label{ex:fsctopmc}
  Recall Fig.~\ref{fig:pomdp_and_fsc} and Ex.~\ref{ex:constant_induced_mc}.
  The action mapping $\actionMap$ and the memory update $\nodeTransition$ have arbitrary but fixed probability distributions.
  For $\act_1$, we represent the probability $\actionMap(n_1, \obs_0)(\act_1)\equalscolon p$ by $p\in[0,1]$.
  The memory update yields
  $\nodeTransition(n_1, \obs_0, \act_1)(n_1)\equalscolon q_1\in[0,1]$ and $\nodeTransition(n_1, \obs_0, \act_1)(n_2)\equalscolon 1-q_1$, respectively.
  Fig.~\ref{fig:example_fsc:applied_our_current} shows the induced pMC for action choice $\act_1$.
  For instance, the transition from $\langle s_1,n_1\rangle$ to $\langle s_2,n_1\rangle$ is labelled with polynomial $p\cdot 0.6\cdot q_1$. \\
  We collect all polynomials for observation $\obs_0$ in Fig.~\ref{tab:fsc_example_tab}.
  The \emph{result} column describes a \emph{parameterised distribution} over tuples of states  and memory nodes.
  Thus, instantiations for these polynomials need to sum up to one.
\end{example}
As the next step, we define the pMC that results from combining a $k$-FSC with a POMDP.
The idea is to assign parameters as arbitrary probabilities to action choices.
Each observation has one  \emph{remaining action}  given by a mapping $\remaining\colon \ObsSym \rightarrow \Act$.
$\remaining(\obs) \in \ActS(\obs)$ is the action to which, after choosing probabilities
for all other actions in $\ActS(\obs)$, the remaining probability is assigned.
A similar principle holds for the remaining memory node.

\begin{definition}[Induced pMC for a {\boldmath $k$}-FSC on POMDPs]
    \label{def:pomdp_to_pmc_direct}
    Let $\PomdpInit$ be a POMDP with $\MdpInit$ and let $k>0$ be a memory bound.
    The \emph{induced pMC} \allowbreak$\pDtmcInitMk$ is defined by:
    \begin{compactitem}
    \item $\states_{\pomdp,k} = \states \times \{0, \hdots,k-1\}$, 
    \item $\sinitMk = \langle \sinit,0 \rangle$,
    \item $V_{\pomdp,k} = \bigl\{ p^{\obs,n}_\act ~\big|~ \obs \in \ObsSym, n \in [k{-}1],$ \\
            \hspace*{\fill} $\act \in \ActS(\obs), \act \neq \remaining(\obs) \bigr\}$ \\
            $\phantom{V_{\pomdp,k}}\, \cup \bigl\{ q^{\obs,n}_{\act,n'} ~\big|~\obs \in \ObsSym,  n,n' \in [k{-}1],$\\
            \hspace*{\fill} $n' \neq k-1, \act \in \ActS(\obs) \bigr\}$,
    \item $\probdtmc_{\pomdp,k}(s,s') = \sum_{\act \in \ActS(s)} H(s,s',\act)$ for all $s,s'\in S'$,
    \end{compactitem}
    where $H\colon\states_{\pomdp,k} \times\states_{\pomdp,k} \times \Act\to\R$ is for
    $\obs = \ObsFun(s)$ defined by $H\bigl(\langle s, n \rangle, \langle s', n' \rangle, \act\bigr)=$
    \begin{align*}
        \probmdp(s,\act,s') & \cdot \left\{\begin{array}{lr}
        p^{z,n}_\act, & \text{if } \act \neq \remaining\bigl(z\bigr)\\
        1-\sum\limits_{b\neq \act}p^{z,n}_{b}, & \text{if } \act = \remaining\bigl(z\bigr)
        \end{array}\right\} \\
            & \cdot \left\{\begin{array}{lr}
        q^{z,n}_{\act,n'}, & \text{if } n'\neq k{-}1\\
        1-\sum\limits_{\bar{n}\neq n'}q^{z,n}_{\act,\bar{n}}, & \text{if } n'=k{-}1
        \end{array}\right\}
        \!\!\! \begin{array}{l} \\.\end{array}
    \end{align*}
\end{definition}
Intuitively, $H(s,s',\act)$ describes the probability mass from $s$ to $s'$ in the induced pMC that is
contributed by action $\act$.
The three terms correspond to the terms as seen in the first three columns of Fig.~\ref{tab:fsc_example_tab}.
\begin{figure}[t]
\centering
\hspace*{\stretch{1}}
\subfigure[POMDP $\pomdp$]{
\label{fig:POMDPtoAnyPMC:pomdp}
\hspace*{\stretch{1}}
\scalebox{0.8}{
\begin{tikzpicture}[font=\scriptsize]
\node[state, fill=blue!40] (s0) {$s_0$};
\node[state, right=2cm of s0, fill=white!40] (s2)  {$s_2$};
\node[state, above=of s2, fill=red!40] (s1)  {$s_1$};
\node[state, below=of s2, fill=red!40] (s3)  {$s_3$};

\node[circle, inner sep=2pt, fill=black, left=1cm of s1] (a1) {};
\node[circle, inner sep=2pt, fill=black, left=1cm of s2] (a2) {};
\node[circle, inner sep=2pt, fill=black, left=1cm of s3] (a3) {};
\node[circle, inner sep=2pt, fill=black, right=0.5cm of s1] (a4) {};
\node[circle, inner sep=2pt, fill=black, below=0.4cm of s1] (a5) {};
\node[circle, inner sep=2pt, fill=black, right=0.5cm of s3] (a6) {};
\node[circle, inner sep=2pt, fill=black, above=0.4cm of s3] (a7) {};

\draw[->] (s0) -- node[above] {$\act_1$} (a1);
\draw[->] (s0) -- node[below] {$\act_2$} (a2);
\draw[->] (s0) -- node[above] {$\act_3$} (a3);
\draw[->] (s1) -- node[above] {$\act_1$} (a4);
\draw[->] (s1) -- node[left] {$\act_2$} (a5);
\draw[->] (s3) -- node[below] {$\act_1$} (a6);
\draw[->] (s3) -- node[left] {$\act_2$} (a7);

\draw[->] (a1) -- node[above] {$1$} (s1);
\draw[->] (a2) -- node[above] {$0.5$} (s2);
\draw[->] (a2) -- node[left] {$0.5$} (s3);
\draw[->] (a3) -- node[above] {$1$} (s3);

\draw[->] (a4) edge[bend left] node[above] {$1$} (s2);

\draw[->] (a5) -- node[above] {$0.5$} (s0);
\draw[->] (a5) -- node[left] {$0.5$} (s2);

\draw[->] (a6) edge[bend right] node[above] {$1$} (s2);

\draw[->] (a7) edge[bend left] node[right] {$1$} (s3);
\end{tikzpicture}
}
}\hspace*{\stretch{1}}
\subfigure[Induced pMC $\MC_{\pomdp,1}$]{
\label{fig:POMDPtoAnyPMC:pmc}
\quad\scalebox{0.8}{
\begin{tikzpicture}[font=\scriptsize]
\node[state] (s0) {$s_0$};
\node[state, right=2cm of s0] (s2)  {$s_2$};
\node[state, above=of s2] (s1)  {$s_1$};
\node[state, below=of s2] (s3)  {$s_3$};

\draw[->] (s0) -- node[above] {${\color{blue}p_1} \cdot 1$} (s1);
\draw[->] (s0) -- node[above] {${\color{blue}p_2} \cdot 0.5$} (s2);
\draw[->] (s0) -- node[left, align=center] {${\color{blue}p_2} \cdot 0.5+{}$\\${\color{blue}(1-p_1-p_2)} \cdot 1 $} (s3);

\draw[->] (s3) edge[loop right] node[right] {$\color{red}q$} (s3);
\draw[->] (s3) edge node[right] {$\color{red}1-q$} (s2);
\draw[->] (s1) edge[bend right] node[left] {$0.5\cdot {\color{red}q}$\ \ } (s0);
\draw[->] (s1) edge node[right] {$1-0.5\cdot {\color{red}q}$} (s2);
\draw[->] (s2) edge[loop right] node[right] {$1$} (s2);
\end{tikzpicture}\quad
}
}
\hspace*{\stretch{1}}
\caption{From POMDPs to pMCs ($k=1$)}
\label{fig:POMDPtoAnyPMC}
\end{figure}
\begin{example}
  Consider the POMDP in Fig.~\ref{fig:POMDPtoAnyPMC:pomdp} and let $k=1$.
  The induced pMC is given in Fig.~\ref{fig:POMDPtoAnyPMC:pmc}.
  The three actions from $s_0$ have probability $p_1$, $p_2$, and $1{-}p_1{-}p_2$ for the \emph{remaining action} $\act_3$.
  From the indistinguishable states $s_1$, $s_3$, actions have probability $q$ and $1{-}q$, respectively.
\end{example}
By construction, the induced pMC describes the set of all induced MCs:
\begin{theorem}[Correspondence Theorem]\label{theorem:great_theorem_of_correspondence}
  For POMDP $\pomdp$, memory bound $k$, and the induced pMC $\MC_{\pomdp,k}$:
  \[
    \bigl\{ \MC_{\pomdp,k}[u]~\big|~ u \text{ well-defined} \bigr\} = \bigl\{ \pomdp^{\osched_\fsc}~\big|~\fsc \in \FSCs[k]{\pomdp} \bigr\}.
  \]
  In particular, \emph{every well-defined instantiation $u$ describes an FSC $\fsc_u \in \FSCs[k]{\pomdp}$}.
\end{theorem}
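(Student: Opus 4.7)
The plan is to establish the set equality by constructing, for each memory bound $k$, a bijection $\fsc \leftrightarrow u_\fsc$ between $k$-FSCs on $\pomdp$ and well-defined instantiations of $\MC_{\pomdp,k}$, and then to verify that $\MC_{\pomdp,k}[u_\fsc] = \pomdp^{\osched_\fsc}$ as MCs on the common state space $S \times [k-1]$. Once this is in place, both set inclusions follow immediately: each FSC yields a well-defined instantiation, each well-defined instantiation yields an FSC, and their images under the two constructions coincide.

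For the bijection, given $\fsc = (N, \ninit, \actionMap, \nodeTransition)$ with $N = [k-1]$ and $\ninit = 0$, I would set $u_\fsc(p^{z,n}_\act) := \actionMap(n,z)(\act)$ for each $\act \neq \remaining(z)$ and $u_\fsc(q^{z,n}_{\act,n'}) := \nodeTransition(n,z,\act)(n')$ for each $n' \neq k-1$. Well-definedness of $u_\fsc$ is then immediate, because $1 - \sum_{b \neq \remaining(z)} u_\fsc(p^{z,n}_b) = \actionMap(n,z)(\remaining(z))$ and analogously for the implicit coordinate $q^{z,n}_{\act,k-1}$, so the induced polynomials instantiate to valid probability distributions. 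Conversely, a well-defined $u$ determines an FSC $\fsc_u$ via the same identities (here the fact that $u$ is well-defined is exactly what makes the resulting action mapping land in $\Distr(\Act)$ and the memory update in $\Distr(N)$), and the two constructions are mutually inverse.

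To compare the induced MCs, I would expand the transition probability from $\langle s, n\rangle$ to $\langle s', n'\rangle$ in $\pomdp^{\osched_\fsc}$ into
\[
  \sum_{\act \in \ActS(s)} \actionMap(n,\ObsFun(s))(\act) \cdot \probmdp(s,\act,s') \cdot \nodeTransition(n,\ObsFun(s),\act)(n'),
\]
which follows from reading Def.~\ref{def:fsc} against Def.~\ref{def:induced_dtmc}. Substituting under the bijection and splitting each factor into its remaining versus non-remaining case reproduces precisely the three-term product $H(\langle s, n\rangle, \langle s', n'\rangle, \act)$ of Def.~\ref{def:pomdp_to_pmc_direct}, and summing over $\act$ gives the transition probability in $\MC_{\pomdp,k}[u_\fsc]$. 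Equality of transition functions on the shared state space then yields equality of the two MCs.

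The main obstacle is the bookkeeping for the \emph{remaining} encoding: the pMC uses only $|\ActS(z)| - 1$ action parameters per observation-memory pair and $k-1$ update parameters, with the final coordinate filled in implicitly as $1$ minus the others. I have to check carefully that this implicit substitution recovers exactly the missing probability mass in $\actionMap$ respectively $\nodeTransition$ under the bijection -- this is precisely where well-definedness of $u$ is used non-trivially. A secondary point is identifying the path-based induced MC of Def.~\ref{def:induced_dtmc} with the state-memory product implicitly shown in Fig.~\ref{fig:induced_mc_fsc_pomdp}; this is standard, since for finite-memory strategies, paths ending in the same current state and memory node carry identical continuation probabilities and can be merged without altering the underlying cylinder measure.
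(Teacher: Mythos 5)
Your proposal is correct and takes essentially the same route as the paper, which proves the theorem only implicitly ``by construction'': your bijection $\fsc \leftrightarrow u_\fsc$ via $u_\fsc(p^{z,n}_\act)=\actionMap(n,z)(\act)$, $u_\fsc(q^{z,n}_{\act,n'})=\nodeTransition(n,z,\act)(n')$ and the three-factor expansion of the transition probabilities is exactly the decomposition the paper illustrates in Fig.~\ref{fig:induced_pmc_example} and Def.~\ref{def:pomdp_to_pmc_direct}, with the ``remaining'' action/node bookkeeping and the identification of the path-based induced MC with the product on $S\times[k{-}1]$ made explicit. Like the paper, you read well-definedness as constraining each action and memory weight to be a genuine probability, which is the intended interpretation, so no gap relative to the paper's argument.
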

By the correspondence, we can thus evaluate an instantiation of the induced pMC to assess whether the
corresponding $k$-FSC satisfies a given specification.
\begin{corollary}
  Given an induced pMC  $\MC_{\pomdp,k}$ and a specification $\varphi$: For every well-defined instantiation $u$ of $\MC_{\pomdp,k}$  and the corresponding $k$-FSC $\fsc_u$ we have: \\
  \centerline{$\pomdp^{\osched_{\fsc_u}} \models \varphi \iff \MC_{\pomdp,k}[u]\models \varphi$.}
\end{corollary}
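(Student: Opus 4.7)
The plan is to derive this corollary directly from Theorem~\ref{theorem:great_theorem_of_correspondence}, which already establishes the key fact that every well-defined $u$ yields an FSC $\fsc_u \in \FSCs[k]{\pomdp}$ such that $\MC_{\pomdp,k}[u]$ and $\pomdp^{\osched_{\fsc_u}}$ coincide as Markov chains. The corollary only asks us to promote this structural correspondence to a correspondence at the level of specifications, so the two steps I would carry out are (i) make the Markov-chain identification explicit, and (ii) argue that reach-avoid (or expected reachability reward) satisfaction is invariant along it.

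For step (i) I would compare the transition function $\probdtmc_{\pomdp,k}$ from Def~\ref{def:pomdp_to_pmc_direct} evaluated at $u$ with the one-step probability of $\pomdp^{\osched_{\fsc_u}}$ restricted to state--memory pairs. The three factors of $H(\langle s,n\rangle, \langle s',n'\rangle, \act)$ match exactly the POMDP transition $\probmdp(s,\act,s')$, the action-selection of $\fsc_u$ at $(n, \ObsFun(s))$, and the memory update of $\fsc_u$; the ``remaining action/node'' cases are precisely what well-definedness of $u$ enforces, so the values of $p^{\obs,n}_{\act}[u]$ and $q^{\obs,n}_{\act,n'}[u]$ are exactly the distributions $\actionMap$ and $\nodeTransition$ of the corresponding $\fsc_u$. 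Summing over $\act \in \ActS(s)$ then reproduces the one-step transitions of $\pomdp^{\osched_{\fsc_u}}$ on the quotient over $\states \times N$. Although Def~\ref{def:induced_dtmc} phrases $\pomdp^{\osched_{\fsc_u}}$ over finite paths, a standard bisimulation argument collapses it onto $\states \times N$, since an observation-based FSC with $k$ memory nodes makes the future of any path $\pi$ depend only on $(\last{\pi}, n)$ where $n$ is the current memory node. This is precisely the identification already used implicitly in the forward direction of Theorem~\ref{theorem:great_theorem_of_correspondence}.

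For step (ii), I would invoke the fact that the cylinder-set probability measure depends only on the transition probabilities of an MC, so the two MCs assign equal probabilities to the cylinder family corresponding to $\neg B \, \Until \, G$; the analogous argument carries over to expected-reward measures. The labels $B, G \subseteq \states$ transport along $\langle s, n \rangle \mapsto \langle s, n \rangle$ because a state--memory pair inherits its label from its $\states$-component alone. The only mildly technical step is the bisimulation folding from finite-path states to state--memory pairs; everything else is an immediate consequence of Theorem~\ref{theorem:great_theorem_of_correspondence}, and I do not anticipate a serious obstacle.
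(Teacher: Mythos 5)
Your proposal is correct and follows the same route as the paper, which states the corollary as an immediate consequence of Theorem~\ref{theorem:great_theorem_of_correspondence} without further proof: the instantiated pMC and the MC induced by the corresponding $k$-FSC are the same chain (up to the standard identification of the path-based induced MC with its state--memory-pair quotient), so they satisfy exactly the same reach-avoid and expected-reward specifications. Your explicit treatment of the folding onto $\states \times N$ and of label transport via the $\states$-component simply spells out details the paper leaves implicit.
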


\iftoggle{TR}{
\begin{lemma}[Number of Parameters]
  The number of parameters in the induced pMC $\dtmc_{\pomdp, k}$ is given by
  $\mathcal{O}\bigl(|\ObsSym| \cdot k^2 \cdot \max_{\obs \in \ObsSym} |\ActS(\obs)|\bigr)$.
\end{lemma}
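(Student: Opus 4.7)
The plan is a direct counting argument based on the parameter set $V_{\pomdp,k}$ given in Definition~\ref{def:pomdp_to_pmc_direct}. I would split $V_{\pomdp,k}$ into its two constituent families, bound each, and take the sum.

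First, recall the convention $[k-1] = \{0, \ldots, k-1\}$, which has exactly $k$ elements. I would bound the \emph{action-choice parameters} $p^{\obs,n}_\act$, which are indexed by $\obs \in \ObsSym$, $n \in [k-1]$, and $\act \in \ActS(\obs) \setminus \{\remaining(\obs)\}$. Writing $A_{\max} = \max_{\obs \in \ObsSym} |\ActS(\obs)|$, the number of such parameters is
\[
  \sum_{\obs \in \ObsSym} k \cdot \bigl(|\ActS(\obs)|-1\bigr) \;\leq\; |\ObsSym| \cdot k \cdot A_{\max}.
\]

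Next, I would count the \emph{memory-update parameters} $q^{\obs,n}_{\act,n'}$, indexed by $\obs \in \ObsSym$, $n \in [k-1]$, $n' \in [k-1]\setminus\{k-1\}$, and $\act \in \ActS(\obs)$. Since $n$ contributes $k$ choices, $n'$ contributes $k-1$ choices (the final memory node is the ``remaining'' one), and $\act$ contributes at most $A_{\max}$ choices for each fixed observation, the count is
\[
  \sum_{\obs \in \ObsSym} k \cdot (k-1) \cdot |\ActS(\obs)| \;\leq\; |\ObsSym| \cdot k(k-1) \cdot A_{\max}.
\]

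Summing both contributions gives $|V_{\pomdp,k}| \in \mathcal{O}\bigl(|\ObsSym| \cdot k^2 \cdot A_{\max}\bigr)$, as the quadratic term from the memory updates dominates the linear one from the action choices. There is no genuine obstacle here — the only care needed is to handle the ``remaining'' action and the ``remaining'' memory node correctly so that the $-1$'s in the counts do not spoil the asymptotic bound, which they do not since we only seek an upper bound in $\mathcal{O}$-notation.
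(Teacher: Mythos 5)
Your counting argument is correct and matches the paper's (implicit) justification: the paper states the lemma without a detailed proof, since the bound follows directly from enumerating the two parameter families in the definition of $\VarMk$ exactly as you do. Both the action-choice count $|\ObsSym|\cdot k\cdot\max_{\obs}|\ActS(\obs)|$ and the memory-update count $|\ObsSym|\cdot k(k-1)\cdot\max_{\obs}|\ActS(\obs)|$ are handled properly, including the exclusion of the remaining action and remaining memory node, so nothing is missing.
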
}{
 The number of parameters in the induced pMC $\dtmc_{\pomdp, k}$ is given by
  $\mathcal{O}\bigl(|\ObsSym| \cdot k^2 \cdot \max_{\obs \in \ObsSym} |\ActS(\obs)|\bigr)$.
}

\section{From pMCs to POMDPs (and Back Again)}
\label{sec:equivalence}
In the previous section we have shown that \textsl{$\exists$k-FSC} is at least as hard as \textsl{$\exists$INST}.
We now discuss whether both problems are equally hard:
The open question is whether we can reduce \textsl{$\exists$INST} to \textsl{$\exists$k-FSC}.

A straightforward reduction maintains the states of the pMC in the POMDP, or even yields a POMDP with the same graph structure (the topology) as the pMC.
The next example shows that this naive reduction is impossible.
\begin{figure}[t]
	\centering
		\scalebox{0.65}{
			\begin{tikzpicture}
\node[state] (s0) {$s_0$};
\node[state, below=0.5cm of s0] (s2)  {$s_2$};
\node[state, left=2cm of s2] (s1)  {$s_1$};
\node[state, right=2cm of s2] (s3)  {$s_3$};

\draw ($(s0.north) + (0,0.3)$) edge[->] (s0);
\draw[->] (s0) edge[bend right] node[above] {$p$} (s1);
\draw[->] (s0) -- node[right] {$q$} (s2);
\draw[->] (s0) edge[bend left] node[above] {$1-p-q$} (s3);
\draw[->] (s2) -- node[above] {$p$} (s1);
\draw[->] (s2) -- node[above] {$1-p$} (s3);
\path[->] (s1) edge  [loop left] node {$1$} ();
\path[->] (s3) edge  [loop right] node {$1$} ();
\end{tikzpicture}
		}
		\caption{Non-simple pMC}
		\label{fig:PMCtoPOMDPIssue}
\end{figure}

\begin{example}
  In the pMC in Fig.~\ref{fig:PMCtoPOMDPIssue} the parameter $p$ occurs in two different distributions (at $s_0$ and $s_2$).
  For defining a reduction where the resulting POMDP has the same set of states, there are two options for the observation function at the states $s_0$ and $s_2$:
  Either $\ObsFun(s_0) = \ObsFun(s_2)$ or $\ObsFun(s_0) \neq \ObsFun(s_2)$.
  The intuition is that every (parametric) transition in the pMC corresponds to an action choice in a POMDP.
  Then $\ObsFun(s_0) = \ObsFun(s_2)$ is impossible as $s_0$ and $s_2$ have a different number of outgoing transitions (outdegree).
  Adding a self-loop to $s_2$ does not alleviate the problem.
  Moreover, $\ObsFun(s_0) \neq \ObsFun(s_2)$ is impossible, as a strategy could distinguish $s_0$ and $s_2$ and assign different probabilities to $p$.
\end{example}
The pMC in the example is problematic as the parameters occur at the outgoing transitions of states in different combinations.
We restrict ourselves to an important subclass\footnote{All pMC benchmarks from the \tool{PARAM} webpage~\cite{param_website} are simple pMCs.} of pMCs which we call \emph{simple pMCs}.
A pMC is simple if for all states $s, s'$, $P(s, s') \in \Q \cup \{ p, 1-p \mid p \in \Var \}$.
Consequently, we can map states to parameters, and use this map to define the observations.
Then, the transformation from a POMDP to a pMC is the reverse of the transformation from Def.~\ref{def:pomdp_to_pmc_direct}.
In the remainder, we detail this correspondence.
The correspondence also establishes a construction to compute $k$-FSCs via parameter synthesis on \emph{simple} pMCs.
Current tool-support (cf.\ Sect.~\ref{sec:evaluation}) for simple pMCs is more mature than for the more general pMCs obtained via Def.~\ref{def:pomdp_to_pmc_direct}.

Let \textsl{simple-$\exists$INST} be the restriction of \textsl{$\exists$INST} to simple pMCs.
Similarly, let \textsl{simple-$\exists1$-FSC} be a variant of \textsl{$\exists$1-FSC} that only considers \emph{simple POMDPs}.

\begin{definition}[Binary/Simple POMDP]
	\label{def:simplepomdp}
	A POMDP is \emph{binary}, if $|\ActS(s)| \leq 2$ for all $s \in \states$.
	A binary POMDP is \emph{simple}, if for all $s \in \states$
	\[
	|\ActS(s)| = 2 \implies \forall a \in \ActS(s)~\exists s' \in \states: P(s, a, s') = 1.
	\]
\end{definition}
We establish the following relation between the POMDP and pMC synthesis problems, which asserts that the problems are equivalently hard.
\begin{theorem}
	$L_1 \polred L_2$ holds for any $L_1, L_2 \in \{$ \textsl{$\exists$k-FSC}, \textsl{$\exists$1-FSC}, \textsl{simple-$\exists$1-FSC}, \textsl{simple-$\exists$INST} $\}$.
\end{theorem}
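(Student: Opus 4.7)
The plan is to establish the cycle
\[
\exists k\text{-FSC}\ \polred\ \exists 1\text{-FSC}\ \polred\ \text{simple-}\exists 1\text{-FSC}\ \polred\ \text{simple-}\exists\text{INST}\ \polred\ \exists k\text{-FSC},
\]
from which all pairwise reductions in the statement follow by transitivity (together with the trivial identity reductions in the reverse direction of each ``restriction'' step).

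For $\exists k\text{-FSC}\polred\exists 1\text{-FSC}$, I would employ a memory-product construction: build a POMDP whose states are $\states\times[k{-}1]$ (recall that $k$ is unary-encoded, so this blow-up is polynomial), with observations that reveal the memory index alongside the original observation, and with auxiliary intermediate states that let a memoryless strategy choose first an action and then a memory update. A memoryless observation-based strategy on the product then corresponds bijectively to a $k$-FSC of $\pomdp$. For $\exists 1\text{-FSC}\polred\text{simple-}\exists 1\text{-FSC}$, I would binarise: a state with more than two enabled actions is replaced by a tree of two-action intermediate states (each carrying a fresh observation), and a transition distribution with more than two successors is replaced by a chain of binary probabilistic branchings. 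This yields a simple POMDP while preserving the set of realisable memoryless observation-based behaviours under the obvious translation.

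For $\text{simple-}\exists 1\text{-FSC}\polred\text{simple-}\exists\text{INST}$, I would invoke the Correspondence Theorem~\ref{theorem:great_theorem_of_correspondence} with $k=1$, noting that the induced pMC of a simple POMDP is itself simple in the sense used in Sect.~\ref{sec:equivalence}: each observation class carries at most one action-selection parameter $p^{\obs,0}_{\act}$ together with its complement $1-p^{\obs,0}_{\act}$, and no memory-update parameters occur at all.

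The main obstacle is the closing reduction $\text{simple-}\exists\text{INST}\polred\exists k\text{-FSC}$, which inverts Def.~\ref{def:pomdp_to_pmc_direct}. Given a simple pMC $\dtmc$, I would construct a POMDP whose state space contains the states of $\dtmc$ together with auxiliary action-outcome states that absorb the two branches of each parametric choice. A state $s$ with $P(s,s_1)=p$ and $P(s,s_2)=1-p$ receives two enabled actions leading deterministically to $s_1$ and $s_2$, while constant-probability transitions are preserved verbatim. Any two pMC states sharing the same parameter $p$ are assigned a common observation, so that every memoryless observation-based strategy must allocate the same probability to both actions across them, exactly mirroring a single instantiation $u(p)$. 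The crux is to verify that the resulting POMDP is simple in the sense of Def.~\ref{def:simplepomdp}, that well-defined instantiations are in satisfaction-preserving bijection with memoryless observation-based strategies, and that pMC states with only constant transitions do not introduce spurious observational distinctions; it is precisely the simplicity restriction, which fails for the pMC in Fig.~\ref{fig:PMCtoPOMDPIssue}, that makes this state-preserving translation possible at all.
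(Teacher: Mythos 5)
Your route is essentially the paper's: the same cycle $\exists k\text{-FSC} \polred \exists 1\text{-FSC} \polred \textsl{simple-}\exists 1\text{-FSC} \polred \textsl{simple-}\exists\text{INST}$, closed by the reverse construction from a simple pMC to a simple POMDP, which is precisely Lemma~\ref{lem:simplepmc-simplefsc} (your final step reduces to $\exists k$-FSC with $k=1$, i.e.\ it is that lemma composed with the trivial inclusions ``every simple POMDP is a POMDP'', ``every $1$-FSC is a $k$-FSC''). The deviations are minor: for $\exists k\text{-FSC} \polred \exists 1\text{-FSC}$ the paper does not insert intermediate decision states but uses product actions $\Act \times \{0,\dots,k{-}1\}$ (Def.~\ref{def:unfolding}) and justifies this via Proposition~\ref{prop:unfolding_instantiations_are_equal}, i.e.\ by the fact that a joint distribution over (action, next node) factors into $\actionMap$ and an action-dependent $\nodeTransition$; your two-stage variant works too (with zero rewards and unchanged target/bad sets on the auxiliary states). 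Binarising the probabilistic branching is unnecessary, though harmless: Def.~\ref{def:simplepomdp} only constrains states with two enabled actions, and simple pMCs may have arbitrarily many constant-probability successors.

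One point must be made explicit, because as written it is the one place the argument could actually fail: in the binarisation step (and equally for the auxiliary states of your $k$-unfolding variant), the ``fresh observations'' of the introduced states must be \emph{shared} by all states carrying the same original observation (and the same memory index / position in the action tree) --- the tree has to be built per observation, not per state, which is possible because equally observed states have equal action sets. If each original state received its own fresh observations, a memoryless observation-based strategy of the transformed POMDP could resolve the split choices differently in states that are indistinguishable in $\pomdp$ (or make the memory update depend on the concrete state rather than on $(n,\obs,\act)$), so the transformed instance could be satisfiable while the original is not, and the reduction would be unsound. The paper flags exactly this requirement in its construction (``all states with the same observation should be handled the same way\dots\ the same observations should be used for the introduced auxiliary states''); your write-up asserts preservation of the realisable behaviours without securing it.
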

The proof is a direct consequence of the Lemmas~\ref{lem:simplefsc-simplepmc}-\ref{lem:kfsc-1fsc} below, as well as the facts that every $1$-FSC is a $k$-FSC, and every simple POMDP is a POMDP.

The induced pMC $\dtmc_{\pomdp, 1}$ of a simple POMDP $\pomdp$ is also simple.
Consequently, Sect.~\ref{sec:verification} yields:
\begin{lemma}\label{lem:simplefsc-simplepmc}
	\textsl{simple-$\exists$1-FSC} $\polred$ \textsl{simple-$\exists$INST}.
\end{lemma}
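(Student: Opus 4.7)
The plan is to exhibit as the reduction $f$ the map that sends an instance $(\pomdp, \varphi)$ of \textsl{simple-$\exists$1-FSC} to the instance $(\dtmc_{\pomdp, 1}, \varphi)$ of \textsl{simple-$\exists$INST}, where $\dtmc_{\pomdp,1}$ is the induced pMC of Definition~\ref{def:pomdp_to_pmc_direct} with memory bound $k=1$. The preservation of satisfaction is already delivered by Theorem~\ref{theorem:great_theorem_of_correspondence} and its corollary: a well-defined instantiation of $\dtmc_{\pomdp,1}$ satisfying $\varphi$ exists iff some $1$-FSC of $\pomdp$ induces a Markov chain satisfying $\varphi$. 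Hence the only non-trivial obligation is to check that $f$ maps into the correct target language, i.e.\ that $\dtmc_{\pomdp,1}$ is itself a simple pMC whenever $\pomdp$ is a simple POMDP.

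First I would note that for $k=1$ the memory-update parameters $q^{z,n}_{a,n'}$ vanish from $V_{\pomdp,1}$: the defining condition $n' \neq k-1 = 0$ is unsatisfiable on $[k-1] = \{0\}$, so the product in Definition~\ref{def:pomdp_to_pmc_direct} collapses to the single memory node $0$. The only remaining parameters are the action-choice parameters $p^{z,0}_a$ for $a \in \ActS(z) \setminus \{\remaining(z)\}$.

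Next I would do a short case analysis of the transition function $H$ using the simple-POMDP hypothesis. If $|\ActS(s)| = 1$, the unique action gets factor $1 - \sum_{b \neq a} p^{z,0}_b = 1$, so the outgoing transition probabilities of $\langle s,0\rangle$ are just the rationals $\probmdp(s,a,s') \in \Q$. If $|\ActS(s)| = 2$, write the two actions as $a$ and $\remaining(z)$; Definition~\ref{def:simplepomdp} then forces unique successors $s_a$ and $s_{\remaining(z)}$ with $\probmdp(s,a,s_a) = \probmdp(s,\remaining(z),s_{\remaining(z)}) = 1$, so the only transitions leaving $\langle s,0\rangle$ have probabilities $p^{z,0}_a$ and $1 - p^{z,0}_a$ respectively. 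In both cases every transition probability lies in $\Q \cup \{p, 1-p \mid p \in V\}$, which is exactly the definition of a simple pMC.

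Finally, $f$ is clearly polynomial-time computable: for $k=1$ the state space $S_{\pomdp,1}$ has the same size as $S$, and both the parameter set and the transition function are read off directly from $\pomdp$ in linear time. The only real subtlety is the $|\ActS(s)| = 2$ case above; one must verify that the simplicity assumption prevents $H$ from producing mixed terms such as $\frac{1}{2}\cdot p$ that would fall outside $\Q \cup \{p, 1-p\}$. This is precisely the reason the lemma restricts to \emph{simple} POMDPs, and is the one place where the construction would fail for general binary POMDPs.
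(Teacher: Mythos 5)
Your proposal is correct and follows exactly the paper's route: the paper proves this lemma by simply observing that the induced pMC $\dtmc_{\pomdp,1}$ of a simple POMDP is itself a simple pMC and then invoking the correspondence results of Sect.~\ref{sec:verification}. Your case analysis (memory parameters vanishing for $k=1$, single-action states yielding rational transitions, two-action states with Dirac successors yielding $p$ and $1-p$) just spells out in detail the one-line claim the paper leaves implicit.
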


\subsection{From Simple pMCs to POMDPs}
\begin{theorem}
	Every simple pMC $\dtmc$ with $n$ states and $m$ parameters is isomorphic to $\dtmc_{\pomdp,1}$ for some simple POMDP $\pomdp$ with $n$ states and $m$ observations.
\end{theorem}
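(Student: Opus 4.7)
The plan is to invert the construction of Def.~\ref{def:pomdp_to_pmc_direct} at $k=1$: given a simple pMC $\dtmc$, build a simple POMDP $\pomdp$ whose induced pMC recovers $\dtmc$. First I would classify each state $s$ of $\dtmc$ by its outgoing transitions. Since $\dtmc$ is simple, every $P(s, s') \in \Q \cup \{p, 1-p \mid p \in V\}$, and for the outgoing mass to sum to $1$ each state either has (a)~entirely rational outgoing probabilities summing to $1$, or (b)~exactly two parametric outgoing transitions of the form $P(s, s^+) = p_i$ and $P(s, s^-) = 1 - p_i$ for a unique $p_i \in V$.

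Next I would define $\pomdp$ on the same state space with the same initial state. A type~(b) state $s$ with parameter $p_i$ receives two deterministic actions $a^+, a^-$ with $P_\pomdp(s, a^+, s^+) = 1$ and $P_\pomdp(s, a^-, s^-) = 1$; by Def.~\ref{def:simplepomdp} this keeps $\pomdp$ simple. All type~(b) states carrying the \emph{same} pMC-parameter $p_i$ are assigned the \emph{same} observation $z_i$, which yields $m$ observations $z_1,\ldots,z_m$. A type~(a) state receives a single action whose distribution copies that of $\dtmc$; such states can reuse one of the $z_i$'s (in the typical PARAM-style benchmarks every state is of type~(b), in which case exactly $m$ observations suffice) or be given auxiliary observations without affecting the isomorphism.

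Then I would verify $\dtmc_{\pomdp,1} \cong \dtmc$ by direct computation with Def.~\ref{def:pomdp_to_pmc_direct}. At $k=1$ no memory parameters appear, because the condition $n' \neq k-1$ is empty when $k{-}1 = 0$; and for each observation $z_i$ with $\ActS(z_i) = \{a^+, a^-\}$ exactly one parameter $p^{z_i,0}_{a^+}$ is introduced (taking $\remaining(z_i) = a^-$). The isomorphism sends $p^{z_i,0}_{a^+} \mapsto p_i$. Under this renaming the induced mass from a type~(b) state $s$ (with observation $z_i$) to $s^+$ equals $1 \cdot p^{z_i,0}_{a^+} \cdot 1 = p_i$ and to $s^-$ equals $1 \cdot (1 - p^{z_i,0}_{a^+}) \cdot 1 = 1-p_i$, matching $\dtmc$. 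Type~(a) states reproduce their rational distribution since their unique action is the remaining action.

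The main obstacle is making the isomorphism consistent when a pMC-parameter $p_i$ recurs in many states: Def.~\ref{def:pomdp_to_pmc_direct} generates parameter symbols indexed by the observation, not the state, so assigning every $p_i$-state to the single observation $z_i$ is exactly what is needed to preserve the sharing pattern of $\dtmc$. The only other subtlety is checking that at each type~(b) state the two successors $s^+, s^-$ are well-defined (they may coincide, but this does not affect the isomorphism) and that the global assumption from Sect.~\ref{ssec:prob_models} ensuring ``nontrivial'' parametric transitions is preserved by the construction.
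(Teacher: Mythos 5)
Your construction is essentially the paper's own: the paper also refrains from a fully formal proof and sketches exactly this reverse of Def.~\ref{def:pomdp_to_pmc_direct}, taking observations $\{\,\obs_p \mid p \in \Var\,\}$, two actions with Dirac distributions at each parametric ($p$, $1{-}p$) state, and a single action at parameter-free states, with the parameter-sharing pattern preserved precisely because the observation is indexed by the parameter. The only minor caveat is that letting a parameter-free (single-action) state reuse an observation $z_i$ of a two-action state would clash with the requirement that equal observations imply equal action sets, so such states need separate observations --- a looseness the paper's sketch (``any observation'') shares.
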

We refrain from a formal proof: The construction is the reverse of Def.~\ref{def:pomdp_to_pmc_direct}, with
observations  $\{ \obs_p~|~p\in\VarMk[\dtmc]\}$.
In a simple pMC, the outgoing transitions are either all parameter free, or of the form $p, 1{-}p$.
The parameter-free case is transformed into a POMDP state with a single action (and any observation).
The parametric case is transformed into a state with two actions with Dirac-distributions attached. As observation we use $\obs_p$.
\begin{lemma}\label{lem:simplepmc-simplefsc}
\textsl{simple-$\exists$INST} $\polred$ \textsl{simple-$\exists$1-FSC}.
\end{lemma}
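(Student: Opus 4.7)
The plan is to bolt together the two results that precede the lemma: the isomorphism theorem that immediately precedes it, and the Correspondence Theorem~\ref{theorem:great_theorem_of_correspondence} from Sect.~\ref{sec:verification}. Given an instance $(\dtmc, \varphi)$ of \textsl{simple-$\exists$INST}, the reduction $f$ will produce $(\pomdp, \varphi')$ where $\pomdp$ is a simple POMDP with $\dtmc_{\pomdp,1}$ isomorphic to $\dtmc$, and $\varphi'$ is the image of $\varphi$ under this isomorphism. Everything is done in linear time in $|\dtmc|$.

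First, I would carry out the construction of $\pomdp$ explicitly, inverting Def.~\ref{def:pomdp_to_pmc_direct} at $k{=}1$ as sketched just after the preceding theorem: keep the state space of $\dtmc$; for each state $s$ with parameter-free outgoing distribution, install a single action $a_s$ with $\probmdp(s,a_s,\cdot) = \probdtmc(s,\cdot)$, and assign $s$ any dummy observation; for each state $s$ whose outgoing distribution has the form $\{p, 1{-}p\}$ leading to successors $s_1, s_2$, install two actions $\act_1, \act_2$ with Dirac transitions to $s_1$ and $s_2$ respectively, set $\ObsFun(s) = \obs_p$, and fix $\remaining(\obs_p) = \act_2$. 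States sharing a parameter in $\dtmc$ thereby share the observation $\obs_p$ in $\pomdp$. By construction, $\pomdp$ is simple (Def.~\ref{def:simplepomdp}), and plugging $\pomdp$ into Def.~\ref{def:pomdp_to_pmc_direct} with $k=1$ recovers $\dtmc$ up to renaming of parameters, as the memory-update factor collapses to $1$ and the action-choice factor becomes exactly the parameter labelling $p$ or $1-p$. Define $\varphi'$ by replacing each occurrence of a state set $T \subseteq S^\dtmc$ in $\varphi$ by the corresponding set of states in $\pomdp$ under the isomorphism.

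Second, I would invoke the Correspondence Theorem to finish the equivalence. It asserts
\[
  \bigl\{\dtmc_{\pomdp,1}[u] \mid u \text{ well-defined}\bigr\} = \bigl\{\pomdp^{\osched_\fsc} \mid \fsc \in \FSCs[1]{\pomdp}\bigr\},
\]
and the induced pMC is $\dtmc$ itself (up to the parameter renaming). Hence for every well-defined $u$ there is a 1-FSC $\fsc_u$ whose induced MC equals $\dtmc[u]$, and conversely every 1-FSC arises in this way. Since the reach-avoid (and expected reward) probabilities are preserved under MC isomorphism, $\dtmc[u] \models \varphi \iff \pomdp^{\osched_{\fsc_u}} \models \varphi'$. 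This yields existence in one direction iff existence in the other, which is exactly what a Karp reduction requires.

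The only subtle point, and the step I would think about most carefully, is matching \emph{well-defined} instantiations of $\dtmc$ with \emph{admissible} action distributions of memoryless FSCs of $\pomdp$. On the FSC side, $\actionMap(\ninit, \obs_p)$ must be a probability distribution over the two Dirac actions attached to $\obs_p$; this gives a one-parameter family $\actionMap(\ninit,\obs_p)(\act_1) = x_p \in [0,1]$. Well-definedness on the pMC side requires exactly $u(p) \in [0,1]$ for every parameter $p$, and for parameter-free rows no constraint. So the two admissibility conditions coincide pointwise under the identification $x_p = u(p)$, which makes the bijection $u \mapsto \fsc_u$ valid. Once this is checked, the reduction is polynomial and correct, establishing the lemma.
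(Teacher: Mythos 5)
Your proposal is correct and is essentially the paper's own argument: the paper establishes this lemma via the theorem immediately preceding it (every simple pMC is isomorphic to $\dtmc_{\pomdp,1}$ for a simple POMDP $\pomdp$ obtained by reversing Def.~\ref{def:pomdp_to_pmc_direct} at $k=1$, with observations $\obs_p$ for parametric states, single-action states otherwise) combined with Theorem~\ref{theorem:great_theorem_of_correspondence}, which is exactly your route, only spelled out in more detail. One small nit: if all parameter-free states share a single dummy observation, use one common action name for them rather than per-state actions $a_s$, since the paper requires states with the same observation to have the same enabled action set.
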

\begin{figure}[t]
	\centering
	\hspace*{\fill}
		\subfigure[\scriptsize Four actions in a POMDP]{
		\label{fig:BinPOMDP:orig}
		\scalebox{0.6}{
			\hspace*{1.5cm}
			\begin{tikzpicture}
	\node[state] (s0) {$s_0$};
	\node[circle, inner sep=2pt, fill=black, right=1cm of s0, yshift=1cm, label={right:$\Distr_1$}] (a0) {};
	\node[circle, inner sep=2pt, fill=black, right=1cm of s0, yshift=0.5cm, label={right:$\Distr_2$}] (a1) {};
	\node[circle, inner sep=2pt, fill=black, right=1cm of s0, yshift=-0.5cm, label={right:$\Distr_3$}] (a2) {};
	\node[circle, inner sep=2pt, fill=black, right=1cm of s0, yshift=-1cm, label={right:$\Distr_4$}] (a3) {};
	\draw ($(s0.north) + (0,0.2)$) edge[->] (s0);
	\draw[->] (s0) -- node[above] {$\act_1$} (a0);
	\draw[->] (s0) -- node[below,pos=0.65] {$\act_2$} (a1);
	\draw[->] (s0) -- node[above,pos=0.65] {$\act_3$} (a2);
	\draw[->] (s0) -- node[below] {$\act_4$} (a3);
\end{tikzpicture}
			\hspace*{1.5cm}
		}
	}
	\subfigure[\scriptsize Four actions in a binary POMDP]{
		\label{fig:BinPOMDP:linear}
		\scalebox{0.6}{
			\begin{tikzpicture}
	\node[state] (s0) {$s_0$};
	\node[state, right=1.3cm of s0] (s1) {$s_1$};
	\node[state, right=1.3cm of s1] (s2) {$s_2$};
	\node[circle, inner sep=2pt, fill=black, below=.8cm of s0,  label={right:$\Distr_1$}] (a0) {};
	\node[circle, inner sep=2pt, fill=black, right=.7cm of s0] (a1) {};
	\node[circle, inner sep=2pt, fill=black, below=.8cm of s1, label={right:$\Distr_2$}] (a2) {};
	\node[circle, inner sep=2pt, fill=black, right=.7cm of s1] (a3) {};
	\node[circle, inner sep=2pt, fill=black, above=.8cm of s2, label={right:$\Distr_3$}] (a4) {};
	\node[circle, inner sep=2pt, fill=black, below=.8cm of s2, label={right:$\Distr_4$}] (a5) {};
	
	\draw ($(s0.north) + (0,0.2)$) edge[->] (s0);
	\draw[->] (s0) -- node[right] {$\act_1$} (a0);
	\draw[->] (s0) -- node[above] {$\act'_\obs$} (a1);
	\draw[->] (s1) -- node[right] {$\act_2$} (a2);
	\draw[->] (s1) -- node[above] {$\act'_{\obs'}$} (a3);
	\draw[->] (s2) -- node[right] {$\act_3$} (a4);
	\draw[->] (s2) -- node[right] {$\act_4$} (a5);
	\draw[->] (a1) -- node[above] {$1$} (s1);
	\draw[->] (a3) -- node[above] {$1$} (s2);
\end{tikzpicture}
		}
	}
	\hspace*{\fill}
	\\
	\hspace*{\fill}
	\subfigure[\scriptsize Binary POMDP]{
		\label{fig:BinPOMDPtoSimplePMC:binaryPOMDP}
		\scalebox{0.65}{
			\begin{tikzpicture}
\node[state] (s0) {$s_0$};
\node[state, fill=red!40, right=0.7cm of s0] (s1) {$s_1$};
\node[state, fill=red!40, right=0.7cm of s1] (s2) {$s_2$};
\node[circle, inner sep=2pt, fill=black, above=1cm of s1] (a1) {};
\node[circle, inner sep=2pt, fill=black, below=1cm of s1] (a2) {};

\draw ($(s0.north) + (0,0.2)$) edge[->] (s0);
\draw[->] (s0) -- node[left] {$a$} (a1);
\draw[->] (s0) -- node[left] {$b$} (a2);
\draw[->] (a1) -- node[right] {$0.2$} (s1);
\draw[->] (a1) -- node[right] {$0.8$} (s2);
\draw[->] (a2) -- node[right] {$0.5$} (s1);
\draw[->] (a2) -- node[right] {$0.5$} (s2);
\path[->] (s1) edge  [loop left] node {$1$} ();
\path[->] (s2) edge  [loop left] node {$1$} ();
\end{tikzpicture}
		}
	}
	\subfigure[\scriptsize Simple POMDP]{
		\label{fig:BinPOMDPtoSimplePMC:simplePOMDP}
		\scalebox{0.65}{
			\begin{tikzpicture}
\node[state] (s0) {$s_0$};
\node[state, fill=red!40, right=0.7cm of s0] (s1) {$s_1$};
\node[state, fill=red!40, right=0.7cm of s1] (s2) {$s_2$};
\node[circle, inner sep=2pt, fill=black, above=1cm of s0] (a1) {};
\node[circle, inner sep=2pt, fill=black, below=1cm of s0] (a2) {};
\node[state, fill=blue!40, above=of s1] (as1) {$s_a$};
\node[state, fill=blue!40, below=of s1] (as2) {$s_b$};

\draw ($(s0.west) - (0.2,0)$) edge[->] (s0);
\draw[->] (s0) -- node[left] {$a$} (a1);
\draw[->] (s0) -- node[left] {$b$} (a2);
\draw[->] (a1) -- node[above] {$1$} (as1);
\draw[->] (a2) -- node[below] {$1$} (as2);
\draw[->] (as1) -- node[right] {$0.2$} (s1);
\draw[->] (as1) -- node[right] {$0.8$} (s2);
\draw[->] (as2) -- node[right] {$0.5$} (s1);
\draw[->] (as2) -- node[right] {$0.5$} (s2);
\path[->] (s1) edge  [loop left] node {$1$} ();
\path[->] (s2) edge  [loop left] node {$1$} ();
\end{tikzpicture}
		}
	}
	\subfigure[\scriptsize Simple pMC]{
		\label{fig:BinPOMDPtoSimplePMC:simplePMC}
		\scalebox{0.65}{
			\begin{tikzpicture}
\node[state] (s0) {$s_0$};
\node[state, right=0.7cm of s0] (s1) {$s_1$};
\node[state, right=0.7cm of s1] (s2) {$s_2$};
\node[state, above=of s1] (as1) {$s_a$};
\node[state, below=of s1] (as2) {$s_b$};

\draw ($(s0.north) + (0,0.2)$) edge[->] (s0);
\draw[->] (s0) -- node[left] {$p$} (as1);
\draw[->] (s0) -- node[left] {$1-p$} (as2);
\draw[->] (as1) -- node[right] {$0.2$} (s1);
\draw[->] (as1) -- node[right] {$0.8$} (s2);
\draw[->] (as2) -- node[right] {$0.5$} (s1);
\draw[->] (as2) -- node[right] {$0.5$} (s2);
\path[->] (s1) edge  [loop left] node {$1$} ();
\path[->] (s2) edge  [loop left] node {$1$} ();
\end{tikzpicture}
		}
	}
	\hspace*{\fill}

	\hspace*{\stretch{1}}
	\caption{POMDP $\leftrightarrow$ simple pMC}
	\label{fig:BinPOMDPtoSimplePMC}
\end{figure}
\subsection{Making POMDPs simple}
We present a reduction from \textsl{$\exists$1-FSC} to \textsl{simple-$\exists 1$-FSC} by translating a (possibly not simple) POMDP into a \emph{binary} POMDP and subsequently into a \emph{simple} POMDP.
Examples are given in Fig.~\ref{fig:BinPOMDPtoSimplePMC}(a--e).
We emphasise that our construction only preserves the expressiveness of $1$-FSCs.
\iftoggle{TR}{

There are several ways to transform a POMDP into a binary POMDP.
We illustrate one in Fig.~\ref{fig:BinPOMDPtoSimplePMC}(a--b).
The idea is to split actions at a state with more than two actions into two sets, which are then handled by fresh states with fresh observations.
The transformation iteratively reduces the number of actions until every state has
at most two outgoing actions.
To ensure a one-to-one correspondence between 1-FSCs of the original POMDP and the transformed POMDP, all states with the same observation should be handled the same way.
In particular, the same observations should be used for the introduced auxiliary states.

The transformation from binary POMDP to simple POMDP is illustrated by Fig.~\ref{fig:BinPOMDPtoSimplePMC}(c--d).
After each state with a choice of two actions, auxiliary states are introduced, such that the outcome of the action becomes deterministic and the probabilistic choice is delayed to the auxiliary state.
This construction is similar to the conversion of Segala's probabilistic automata into Hansson's alternating model~\cite{SegalaT05}.
Fig.~\ref{fig:BinPOMDPtoSimplePMC:simplePMC} shows the induced simple pMC.
}{Details are given in \cite{TR}.}

\begin{lemma}\label{lem:1fsctosimple1fsc}
    \textsl{$\exists$1-FSC} $\polred$ \textsl{simple-$\exists$1-FSC}.
\end{lemma}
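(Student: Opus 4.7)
The plan is to compose two polynomial reductions, matching the two sub-transformations sketched in Fig.~\ref{fig:BinPOMDPtoSimplePMC}: first turn an arbitrary POMDP into a \emph{binary} POMDP, and then turn the binary POMDP into a \emph{simple} one. Both transformations must preserve the set of achievable induced MCs under $1$-FSCs, so that a witness strategy on one side yields a witness strategy on the other.

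For the first step, at each state $s$ with $|\ActS(s)| = \ell > 2$, I would split $\ActS(\obs)$ (with $\obs = \ObsFun(s)$) into two subsets and introduce a single auxiliary successor $s'$ reachable via one fresh ``dispatch'' action $\act'_\obs$, while the other fresh action heads to a state that only enables the remaining actions of the original set. Crucially, since a $1$-FSC only sees observations, this splitting must be done \emph{uniformly} for all states sharing observation $\obs$: every such state gets the same auxiliary observation $\obs'$, so the same choice probabilities are applied. Iterating this procedure at most $\max_\obs|\ActS(\obs)|$ times yields a binary POMDP whose size is polynomial in the original. The $1$-FSC correspondence works by decomposing a distribution $\distFunc$ on $\ActS(\obs)$ into the marginal probabilities of the two groups (assigned at $\obs$) and the renormalised conditional distributions inside each group (assigned at $\obs'$); conversely, any pair of distributions at $\obs$ and $\obs'$ composes to a distribution on the original actions. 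This bijection induces the same MC up to one-step stuttering through the auxiliary states, hence preserves all reach-avoid and expected-reward specifications used in the paper.

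For the second step, every binary state $s$ with two actions $a, b$ (whose transitions may be genuinely probabilistic) is augmented by two auxiliary states $s_a, s_b$ that carry \emph{fresh} observations $\obs_a, \obs_b$; in the simple POMDP, $a$ and $b$ become Dirac transitions to $s_a$ and $s_b$, and at $s_a, s_b$ there is a single enabled action that performs the original probabilistic branching. Because $s_a, s_b$ have only one available action, every $1$-FSC trivially agrees there, so observation-based memoryless strategies on the simple POMDP are in one-to-one correspondence with those on the binary POMDP. Again the induced MC is unchanged up to collapsing the intermediate deterministic step, so the specification value is preserved. Composing the two reductions and invoking Lemma~\ref{lem:simplefsc-simplepmc} (via the fact that every $1$-FSC is the same object on either side) gives the stated polynomial reduction.

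The main obstacle, and the subtlety I would spend most care on, is the observation labelling in both steps. Each split has to introduce \emph{fresh} observations (to avoid colliding with existing choice points), but those observations must be \emph{shared} across all original states that were grouped by the same observation, so that a memoryless observation-based strategy on the transformed POMDP retains exactly the same degrees of freedom as on the original. If this invariant is broken in either direction, one obtains only a surjection on induced MCs, not a bijection on strategies, and the reduction could accidentally enlarge (or shrink) the set of representable $1$-FSCs. Once this labelling discipline is fixed, verifying that the transformation is computable in polynomial time and that $\pomdp^{\osched_\fsc} \models \varphi$ iff the transformed POMDP admits a corresponding $1$-FSC satisfying~$\varphi$ is a routine structural induction on paths.
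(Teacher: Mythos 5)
Your proposal is correct and follows essentially the same route as the paper: a uniform, observation-respecting splitting of action sets to obtain a binary POMDP, followed by delaying the probabilistic branching into fresh auxiliary states (the alternating-model-style step) to obtain a simple POMDP, with the 1-FSC correspondence preserved because the auxiliary observations are introduced consistently per original observation. The extra detail you give on decomposing distributions into group marginals and renormalised conditionals is a fleshed-out version of the correspondence the paper only sketches.
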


\subsection{From \boldmath $k$-FSCs to $1$-FSCs}
For a POMDP $\pomdp$ and memory bound $k{>}1$, we construct a POMDP $\pomdp_k$ such that $\pomdp$ satisfies a specification $\varphi$ under some $k$-FSC iff $\pomdp_k$ satisfies $\varphi$ under some $1$-FSC.

\begin{definition}[$k$-Unfolding]
	\label{def:unfolding}
	Let $\PomdpInit$ be a POMDP with $\MdpInit$, and $k > 1$.
	The $k$-\emph{unfolding} of $\pomdp$ is the POMDP $\PomdpInit[_k]$ with $\mdp_k = (\states_k, \sinit{_{, k}}, \Act_k, \probmdp{_k})$ defined by:
	\begin{compactitem}
		\item $\states_k = \states \times \{0, \hdots k{-}1\}$, 
		\item $\sinit{_{, k}} = \langle \sinit, 0 \rangle$,
		\item $\Act_k = \Act \times \{0, \hdots, k{-}1\}$
		\item $\probmdp{_k}\bigl(\langle s, n \rangle,\langle \act, \bar{n} \rangle, \langle s', n' \rangle\bigr) {=} \begin{cases}
		\probmdp(s, \act, s') & n' = \bar{n}, \\
		0 & \text{else. }
		\end{cases}$
	\end{compactitem}
	and $\ObsSym_k = \ObsSym \times \{0, \hdots, k{-}1\}$, $\ObsFun_k\bigl(\langle s, n \rangle\bigr) = \bigl\langle \ObsFun(s), n \bigr\rangle$.
\end{definition}

Intuitively, $\pomdp_k$ stores the current memory node into its state space.
At state $\langle s, n \rangle$ of $\pomdp_k$, a $1$-FSC can not only choose between the available actions $\ActS(s)$ in $\pomdp$ but also between different successor memory nodes.

Fig.~\ref{fig:unfolding} shows this process for $k = 2$. All states of the POMDP are copied once.
Different observations allow to determine in which copy of a state---and therefore, which memory cell---we currently are.
Additionally, all actions are duplicated to model the option for a strategy to switch the memory cell.

\begin{figure}[t]
	\centering
	\hfill
	\subfigure[POMDP]{\scalebox{0.65}{
			\begin{tikzpicture}
			\node[state, fill=white!40] (s0) {$s_0$};
			\node[state, right=of s0, fill=white!40] (s1)  {$s_1$};
			\node[state, right=of s1, fill=red!40] (s2)  {$s_2$};
			\draw ($(s0.north) + (0,0.3)$) edge[->] (s0);
			\draw[->] (s0) -- node[above] {} (s1);
			\draw[->] (s1) -- node[above] {} (s2);
			\draw[->] (s2) edge[loop above] node[above] {} (s2);
			\draw[->] (s2) edge[bend right] node[above] {} (s0);
			
			\node[below=1cm of s0] {};
			
			\end{tikzpicture}
	}}\hfill
	\subfigure[2-Unfolding]{\scalebox{0.65}{
			\begin{tikzpicture}
			\node[state, fill=white!40] (s0) {$s_0$};
			\node[state, right=of s0, fill=white!40] (s1)  {$s_1$};
			\node[state, right=of s1, fill=red!40] (s2)  {$s_2$};
			\draw[->] (s0) -- node[above] {} (s1);
			\draw[->] (s1) -- node[above] {} (s2);
			\draw[->] (s2) edge[loop above] node[above] {} (s2);
			\draw[->] (s2) edge[bend right] node[above] {} (s0);
			\node[state, below=of s0, fill=blue!40] (t0) {$s_0$};
			\node[state, right=of t0, fill=blue!40] (t1)  {$s_1$};
			\node[state, right=of t1, fill=black!20] (t2)  {$s_2$};
			\draw ($(s0.north) + (0,0.3)$) edge[->] (s0);
			\draw[->] (t0) -- node[above] {} (s1);
			\draw[->] (t1) -- node[above] {} (s2);
			\draw[->] (t2) edge[bend left] node[above] {} (s2);
			\draw[->] (t2) edge[] node[above] {} (s0);
			\draw[->] (t0) -- node[above] {} (t1);
			\draw[->] (t1) -- node[above] {} (t2);
			\draw[->] (t2) edge[loop below] node[below] {} (t2);
			\draw[->] (t2) edge[bend left] node[above] {} (t0);
			\draw[->] (s0) -- node[above] {} (t1);
			\draw[->] (s1) -- node[above] {} (t2);
			\draw[->] (s2) edge[bend left] node[below] {} (t2);
			\draw[->] (s2) edge[] node[above] {} (t0);
			\end{tikzpicture}
	}}
\hspace{\fill}
	\caption{Unfolding a POMDP for two memory nodes}
	\label{fig:unfolding}
\end{figure}

The induced pMC $\dtmc_{\pomdp_k,1}$ of the $k$-unfolding of $\pomdp$ has the same topology as the induced pMC $\dtmc_{\pomdp, k}$ of $\pomdp$ with memory bound $k$.
In fact, both pMCs have the same instantiations.
\begin{proposition}\label{prop:unfolding_instantiations_are_equal}
	For POMDP $\pomdp$ and memory bound $k$:
	\[
	\{\dtmc_{\pomdp_k,1}[u]~|~ u \text{ well-def.} \}
	=
	\{\dtmc_{\pomdp, k}[u]~|~ u \text{ well-def.} \}.
	\]
\end{proposition}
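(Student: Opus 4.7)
The plan is to apply the Correspondence Theorem (Thm.~\ref{theorem:great_theorem_of_correspondence}) to both sides of the claimed equality. This turns the statement about instantiations into the following equivalent statement about FSCs:
\[
\bigl\{\pomdp^{\osched_\fsc} ~\big|~ \fsc \in \FSCs[k]{\pomdp}\bigr\}
\;=\;
\bigl\{\pomdp_k^{\osched_{\fsc'}} ~\big|~ \fsc' \in \FSCs[1]{\pomdp_k}\bigr\}.
\]
It then suffices to exhibit, in each direction, an FSC on the other side that induces the same MC. The key observation is that a $k$-FSC on $\pomdp$ separately picks an action and an update of its memory, while a $1$-FSC on $\pomdp_k$ must make both choices jointly, since the ``memory cell'' is encoded both in the observation $\langle z,n\rangle$ and in the action $\langle a, n'\rangle$ of $\pomdp_k$.

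For $(\supseteq)$, given $\fsc = (N,\ninit,\actionMap,\nodeTransition) \in \FSCs[k]{\pomdp}$ with $N=\{0,\dots,k{-}1\}$ and $\ninit=0$, I define a $1$-FSC $\fsc'$ on $\pomdp_k$ with single node $n_0$ by folding the memory update into the action choice:
\[
\actionMap'\bigl(n_0, \langle z, n\rangle\bigr)\bigl(\langle a, n'\rangle\bigr) \;:=\; \actionMap(n,z)(a) \cdot \nodeTransition(n,z,a)(n').
\]
Summation over $a$ and $n'$ yields $1$, so $\actionMap'(n_0, \langle z, n\rangle) \in \Distr(\Act_k)$. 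For $(\subseteq)$, given $\fsc' \in \FSCs[1]{\pomdp_k}$, I recover a $k$-FSC $\fsc$ on $\pomdp$ by marginalisation and conditional probabilities:
\[
\actionMap(n,z)(a) \;:=\; \sum_{n'} \actionMap'\bigl(n_0, \langle z, n\rangle\bigr)\bigl(\langle a, n'\rangle\bigr),
\qquad
\nodeTransition(n,z,a)(n') \;:=\; \frac{\actionMap'(n_0, \langle z, n\rangle)(\langle a, n'\rangle)}{\actionMap(n,z)(a)},
\]
where $\nodeTransition(n,z,a)$ is set to an arbitrary fixed distribution whenever $\actionMap(n,z)(a)=0$.

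It remains to check that corresponding FSCs induce the same MC. Identifying states $\langle s, n\rangle$ of $\pomdp_k$ with state-memory pairs in $\pomdp$, both induced MCs have transition probability from $\langle s, n\rangle$ to $\langle s', n'\rangle$ equal to $\sum_{a \in \ActS(s)} \probmdp(s,a,s') \cdot \actionMap(n,z)(a) \cdot \nodeTransition(n,z,a)(n')$ with $z=\ObsFun(s)$: in $\pomdp_k$ this uses that $\probmdp_k(\langle s, n\rangle,\langle a,\bar n\rangle, \langle s',n'\rangle)$ vanishes unless $\bar n = n'$ and otherwise equals $\probmdp(s,a,s')$, together with $\ObsFun_k(\langle s,n\rangle)=\langle z, n\rangle$. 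The only delicate point---and the main obstacle worth noting---is the reverse construction when $\actionMap(n,z)(a)=0$: the division is undefined, but this is harmless, because every transition probability is multiplied by $\actionMap(n,z)(a)$, so the arbitrary choice of $\nodeTransition$ in that case does not affect the induced MC.
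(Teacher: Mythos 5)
Your proof is correct, but it takes a different route than the paper. The paper argues directly at the pMC level: it observes that $\dtmc_{\pomdp_k,1}$ and $\dtmc_{\pomdp,k}$ have the same topology, and that the product parameters $p^{z,n}_a\cdot q^{z,n}_{a,n'}$ of $\dtmc_{\pomdp,k}$ can be replaced by single parameters (this is exactly the substituted induced pMC of Def.~\ref{def:pomdp_to_pmc_substitute} and the proposition following it), so that both pMCs parametrise arbitrary distributions over the same successors $\langle s',n'\rangle$; the FSC-level coincidence is then \emph{derived} from the proposition together with Thm.~\ref{theorem:great_theorem_of_correspondence}. You reverse that order: you apply Thm.~\ref{theorem:great_theorem_of_correspondence} twice to reduce the proposition to the FSC-level statement, and prove that by explicit controller translations---folding $\nodeTransition$ into $\actionMap$ on $\pomdp_k$, and recovering $\actionMap,\nodeTransition$ by marginalisation and conditioning. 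This is not circular, since Thm.~\ref{theorem:great_theorem_of_correspondence} is established independently in Sect.~\ref{sec:verification}, and both arguments ultimately rest on the same fact that joint distributions over (action, next memory node) factor as marginal times conditional. What your version buys is an explicit, checkable construction of the corresponding FSCs (including the correct handling of the $\actionMap(n,z)(a)=0$ corner case, which the paper never needs to confront because single parameters sidestep the division); what the paper's version buys is that it works purely with transition functions and simultaneously yields the substituted pMC used later, avoiding the need to identify the induced-MC state spaces of $\pomdp$ under $k$-FSCs and of $\pomdp_k$ under $1$-FSCs---an identification you rightly make explicit but which is only implicit in the paper.
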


The intuition is that in both pMCs the parameter instantiations reflect arbitrary probability distributions over the same set of successor states.
In the transition probability function of the induced pMC $\dtmc_{\pomdp, k}$ of $\pomdp$ we can also substitute the multiplications of parameters $p_\act^{\obs,n}$ and $q_{\act, n'}^{\obs,n}$ by single parameters.
Consider the POMDP from Fig.~\ref{fig:pomdpfragment} and its induced pMC for memory bound $k=2$.
Tab.~\ref{tab:fsc_example_tab_subst} enumerates the outgoing transitions of the pMC state $\langle s_1,n_1 \rangle$ (cf.\ Fig.~\ref{fig:induced_pmc_example}).
We observe that the polynomials of the form $p\cdot q_i$ and
$p\cdot (1-q_i)$ for $i\in\{0,1\}$ are independent from each other.
We substitute them with single variables in the \emph{substituted} column.
The obtained pMC is called the \emph{substituted induced pMC}.
\begin{definition}[Substituted Induced pMC]
	\label{def:pomdp_to_pmc_substitute}
	Reconsider Def.~\ref{def:pomdp_to_pmc_direct}.
	We define $\pDtmcInitMkSubs$ by modifying $\VarMk$ and $H$ as follows:
	\begin{compactitem}
		\item $\VarMkSubs = \bigl\{r^{\obs,n}_{\act,n'} ~\big|~  \obs \in \ObsSym,  n,n' \in [k{-}1], \act \in \ActS(\obs)
		\text{ with } n' \neq k-1 \vee \act \neq \remaining(z) \bigr\}$,
		\item $H^{\mathrm{subs}}\bigl(\langle s, n \rangle, \langle s', n' \rangle, \act\bigr)= {}$
		\[
		\probmdp(s,\act,s') \cdot \left\{
		\begin{array}{lr}
		r^{z,n}_{\act,n'},\quad \text{if } \act \neq \remaining(z) \lor n' \neq k{-}1 \\
		1-\!\!\!\!\!\sum\limits_{b \neq \act \vee \bar{n}\neq k-1}r^{z,n}_{b,\bar{n}}, \\
		\phantom{r^{z,n}_{\act,n'},}\quad \text{if } \act = \remaining(z) \land n' = k{-}1
		\end{array}\right\}
		\]
		with $z = \ObsFun(s)$, and
		\item $P_{\pomdp,k}^{\mathrm{subs}}(s,s') = \sum_{\act \in \ActS(s)} H^{\mathrm{subs}}(s,s',\act)$ for all $s,s'\in\states_{\pomdp,k}$.
	\end{compactitem}
\end{definition}
The following proposition is a direct consequence:

\begin{proposition}
For POMDP $\pomdp$ and memory bound $k$:
	\[
\bigl\{ \MC_{\pomdp,k}[u]~\big|~ u \text{ well-def.} \bigr\} = \bigl\{ \MC^\text{subs}_{\pomdp,k}[u']~\big|~ u' \text{ well-def.} \bigr\}\,.
\]
\end{proposition}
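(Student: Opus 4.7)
The plan is to prove the set equality by two inclusions. Both $\dtmc_{\pomdp,k}$ and $\dtmc_{\pomdp,k}^{\mathrm{subs}}$ have identical state spaces $\states_{\pomdp,k}$ and identical graph topology; they differ only in how the outgoing transition probabilities at a state $\langle s,n\rangle$ with $z = \ObsFun(s)$ are expressed as polynomials. Intuitively, the original parameterisation factors the joint probability of picking action $\act$ and updating the memory to $n'$ as $p^{z,n}_\act \cdot q^{z,n}_{\act,n'}$, whereas the substituted parameterisation represents this joint directly by $r^{z,n}_{\act,n'}$. Every joint distribution over $\ActS(z) \times \{0,\ldots,k{-}1\}$ is expressible in both forms, and this is the crux of the argument.

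For the inclusion $\subseteq$, given a well-defined instantiation $u$ of $\dtmc_{\pomdp,k}$, I would define $u'$ on $\VarMkSubs$ by $u'(r^{z,n}_{\act,n'}) \coloneqq \bar p_\act \cdot \bar q_{\act,n'}$, where $\bar p_\act$ and $\bar q_{\act,n'}$ denote the values that $u$ assigns to the respective action-choice and memory-update subexpressions (using the $1$-minus-sum convention at $\act = \remaining(z)$ and $n' = k{-}1$). Well-definedness of $u'$ then follows from $\sum_{\act,n'} \bar p_\act \bar q_{\act,n'} = \bigl(\sum_\act \bar p_\act\bigr)\bigl(\sum_{n'} \bar q_{\act,n'}\bigr) = 1$, and a term-by-term comparison of $H$ and $H^{\mathrm{subs}}$ shows that $\dtmc_{\pomdp,k}[u]$ and $\dtmc_{\pomdp,k}^{\mathrm{subs}}[u']$ coincide.

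For the inclusion $\supseteq$, I would invert this correspondence by marginalising and conditioning. Let $\bar r_{\act,n'}$ denote the value that a well-defined $u'$ assigns to the $(\act,n')$ coefficient at $(z,n)$, including the single $1$-minus-sum entry at $(\remaining(z), k{-}1)$. Set $u(p^{z,n}_\act) \coloneqq \sum_{n' \in [k-1]} \bar r_{\act,n'}$ for each $\act \neq \remaining(z)$, which yields the marginal action distribution. For the memory updates, set $u(q^{z,n}_{\act,n'}) \coloneqq \bar r_{\act,n'} \big/ \sum_{\bar n} \bar r_{\act,\bar n}$ for each $n' \neq k{-}1$ whenever the denominator is positive. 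Expanding the polynomials of $H$ under $u$ then reproduces $\bar r_{\act,n'}$, so the two induced MCs agree.

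The main obstacle is the degenerate case in which some action $\act$ has marginal probability $0$ under $u'$, so that the conditional memory-update distribution is undefined by division. In this situation I would pick $u(q^{z,n}_{\act,\cdot})$ to be any fixed probability distribution on $\{0,\ldots,k{-}1\}$, for instance a Dirac on node $0$ expressed via the $1$-minus-sum convention. This choice does not affect the induced MC, since every outgoing transition at $\langle s,n\rangle$ using $\act$ already has probability $0$ under $u'$, matching $\bar p_\act \cdot \bar q_{\act,n'} = 0$ for every $n'$ regardless of how the conditional is chosen. Once this corner is dispatched, the global well-definedness of $u$ reduces to checking $\sum_\act u(p^{z,n}_\act) = 1$ and, for each positive-marginal action, $\sum_{n'} u(q^{z,n}_{\act,n'}) = 1$, both of which follow from $\sum_{\act,n'} \bar r_{\act,n'} = 1$ by a routine computation.
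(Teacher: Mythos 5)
Your proof is correct and takes essentially the same route the paper intends: the paper presents the proposition as a direct consequence of the observation that the products $p^{z,n}_{\act}\cdot q^{z,n}_{\act,n'}$ range over exactly the joint distributions on $\ActS(z)\times[k{-}1]$, which is precisely your marginal/conditional decomposition, including the correct handling of the zero-marginal corner case. One cosmetic remark: since $\bar q_{\act,n'}$ depends on $\act$, the identity should be justified per action as $\sum_{\act}\bar p_{\act}\bigl(\sum_{n'}\bar q_{\act,n'}\bigr)=\sum_{\act}\bar p_{\act}=1$ rather than via the product-of-sums factorisation as literally written.
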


\begin{table}[bth]
	\centering
	\caption{Substitution of polynomials in induced pMC}
        \label{tab:fsc_example_tab_subst}
	\scalebox{0.75}{
\begin{tabular}[b]{@{}|l|l|l|l|l|}
	\hline
	$\Act$                        & $\probmdp$                 & Node   & induced pMC & substituted \\ \hline\hline
	\multirow{4}{*}{$\act_1\colon p$}  & \multirow{2}{*}{0.6} & $n_1\colon q_1$ &  $0.6 \cdot p \cdot q_1$& 	$0.6 \cdot p_1$ 	\\ \cline{3-5}
	&                     & $n_2\colon 1-q_1$ 				&   $0.6 \cdot p \cdot (1{-}q_1)$    	&	$0.6 \cdot p_2$\\ \cline{2-5}
	& \multirow{2}{*}{0.4} & $n_1\colon q_1$ 				&  $0.4 \cdot p \cdot q_1$     		& 	$0.4 \cdot p_1$\\ \cline{3-5}
	&                      & $n_2\colon 1-q_1$ 				&  $0.4 \cdot p \cdot (1{-}q_1)$ 		&   $0.4 \cdot p_2$\\ \cline{1-5}
	\multirow{4}{*}{$\act_2\colon 1-p$} & \multirow{2}{*}{0.7} & $n_1\colon q_2$ 			&  $0.7 \cdot (1{-}p) \cdot q_2$  		&   $0.7 \cdot p_3$\\ \cline{3-5}
	&                      & $n_2\colon 1-q_2$ 				& $0.7 \cdot (1{-}p) \cdot (1{-}q_2)$	&   $0.7 \cdot (1{-}\sum_{i=1}^3p_i)$\\ \cline{2-5}
	& \multirow{2}{*}{0.3} & $n_1\colon q_2$ 				&  $0.3 \cdot (1{-}p) \cdot q_2$      	&	$0.3 \cdot p_3$\\ \cline{3-5}
	&                      & $n_2\colon 1-q_2$ 				&  $0.3 \cdot (1{-}p) \cdot (1{-}q_2)$ &   $0.3 \cdot (1{-}\sum_{i=1}^3p_i)$\\ \hline
\end{tabular}
}
\end{table}

Proposition~\ref{prop:unfolding_instantiations_are_equal} and Thm.~\ref{theorem:great_theorem_of_correspondence}
imply that induced MCs of $\pomdp$ under $k$-FSCs coincide with induced MCs of $\pomdp_k$ under $1$-FSCs:
$
\{ \pomdp^{\osched_\fsc}~|~\fsc \in \FSCs[k]{\pomdp} \}
=
\{ \pomdp_k^{\osched_\fsc}~|~\fsc \in \FSCs[1]{\pomdp} \}
$.

\begin{lemma}
    \label{lem:kfsc-1fsc}
    \textsl{$\exists$k-FSC} $\polred$ \textsl{$\exists$1-FSC}.
\end{lemma}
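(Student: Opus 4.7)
The plan is to use the $k$-unfolding construction of Definition~\ref{def:unfolding} as the polynomial reduction. Given an instance $(\pomdp, \varphi, k)$ of \textsl{$\exists$k-FSC} with $\varphi = \p_{>\lambda}(\neg B \,\Until\, G)$, I would map it to the instance $(\pomdp_k, \varphi_k, 1)$ of \textsl{$\exists$1-FSC}, where the goal and bad sets are lifted in the natural way: $G_k = G \times [k{-}1]$ and $B_k = B \times [k{-}1]$. The unfolding has $|S| \cdot k$ states, $|\Act| \cdot k$ actions, and $|\ObsSym| \cdot k$ observations, all built in time polynomial in $|\pomdp|$ and (the unary encoded) $k$, so the reduction is polynomial.

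The correctness argument proceeds in three steps. First, I would invoke Proposition~\ref{prop:unfolding_instantiations_are_equal}, which gives
\[
\bigl\{ \dtmc_{\pomdp_k,1}[u] ~\big|~ u \text{ well-def.} \bigr\} = \bigl\{ \dtmc_{\pomdp,k}[u] ~\big|~ u \text{ well-def.} \bigr\}.
\]
Second, I would apply the Correspondence Theorem (Thm.~\ref{theorem:great_theorem_of_correspondence}) twice: once to rewrite the left-hand side as the set of MCs induced by $1$-FSCs on $\pomdp_k$, and once to rewrite the right-hand side as the set of MCs induced by $k$-FSCs on $\pomdp$. This yields the key identity
\[
\bigl\{ \pomdp^{\osched_\fsc} ~\big|~ \fsc \in \FSCs[k]{\pomdp} \bigr\} = \bigl\{ \pomdp_k^{\osched_\fsc} ~\big|~ \fsc \in \FSCs[1]{\pomdp_k} \bigr\}
\]
(already noted in the text just before the lemma). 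Third, I would argue that the two sides agree on the probability of the reach-avoid event: a state $\langle s, n \rangle$ of $\pomdp_k$ is in $G_k$ (resp.\ $B_k$) iff $s \in G$ (resp.\ $s \in B$), so the projection $\langle s, n \rangle \mapsto s$ is measure-preserving on the corresponding induced MCs. Hence $\pomdp^{\osched_\fsc} \models \varphi$ for some $k$-FSC $\fsc$ of $\pomdp$ iff $\pomdp_k^{\osched_{\fsc'}} \models \varphi_k$ for some $1$-FSC $\fsc'$ of $\pomdp_k$.

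The only delicate point is verifying Proposition~\ref{prop:unfolding_instantiations_are_equal} itself, which the excerpt motivates informally via the substituted induced pMC of Definition~\ref{def:pomdp_to_pmc_substitute}. The substitution $r_{\act,n'}^{\obs,n} \coloneqq p_\act^{\obs,n} \cdot q_{\act,n'}^{\obs,n}$ is a bijection between well-defined instantiations of $\dtmc^\text{subs}_{\pomdp,k}$ and well-defined instantiations of $\dtmc_{\pomdp_k,1}$, because in $\pomdp_k$ the joint choice of an action $\act$ and a next memory node $n'$ is a single primitive choice (reflected by the fresh action $\langle \act, n' \rangle \in \Act_k$). Matching the polynomials in the substituted pMC to those in $\dtmc_{\pomdp_k,1}$ is the only calculation that must be spelled out; I expect this to be the most technical part, but it is essentially routine, since both sides describe arbitrary probability distributions over the same finite sets of successors $\states \times [k{-}1]$ at each state, parameterised uniformly across states sharing an observation.
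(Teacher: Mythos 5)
Your proposal follows essentially the same route as the paper: the $k$-unfolding of Def.~\ref{def:unfolding} is the reduction, and correctness follows from Proposition~\ref{prop:unfolding_instantiations_are_equal} combined with the Correspondence Theorem, yielding the identity between MCs induced by $k$-FSCs on $\pomdp$ and $1$-FSCs on $\pomdp_k$. You additionally spell out details the paper leaves implicit (lifting $G$ and $B$ to $G\times[k{-}1]$ and $B\times[k{-}1]$, the measure-preserving projection, and polynomiality via the unary encoding of $k$), which is fine and consistent with the paper's argument.
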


\section{Strategy Restrictions}
\label{sec:equivalence:restrictions}
Two simplifying restrictions on the parameters are usually made in parameter synthesis for pMCs:
\begin{compactitem}
  \item Each transition is assigned a strictly positive probability (\emph{graph-preserving}).
  \item Each transition is assigned at least probability $\varepsilon > 0$ (\emph{$\varepsilon$-preserving}).
\end{compactitem}
For simple pMCs, the restrictions correspond to parameters instantiations over $(0,1)$ or $[\varepsilon,1-\varepsilon]$, respectively.

Accordingly, we define restrictions to POMDP strategies that correspond to such restricted parameter instantiations.
\begin{definition}[Non-zero Strategies]
  A strategy $\sched$ is \emph{non-zero} if $\sched(\pi)(\act) > 0$ for all $\pi \in \pathsfin^{\mdp}, \act \in \ActS(\last{\pi})$, and \emph{min-$\varepsilon$}
  if $\sched(\pi)(\act) \geq \varepsilon > 0$.
\end{definition}
Non-zero strategies enforce $\supp(\sched(s)) = \ActS(s)$.
Example~\ref{ex:nonzero} shows the impact on reachability probabilities.
\begin{example}\label{ex:nonzero}
	The MDP $\mdp$ in Fig.~\ref{fig:nonZeroStrategy} has a choice between actions $\act_1$ and $\act_2$ at state $s_0$.
	If action $\act_1$ is chosen with probability zero, the probability to reach $s_1$ from $s_0$ becomes zero, and the corresponding parameter instantiation is not graph-preserving.
	Contrarily, if $\act_1$ is chosen with any positive probability, as would be enforced by a non-zero strategy, the probability to reach $s_1$ is one.
\end{example}
\begin{figure}[t]
\centering
\begin{tikzpicture}[font=\scriptsize]
\node[state, fill=white!40] (s0) {$s_0$};
\node[state, right=2cm of s0, fill=white!40] (s1)  {$s_1$};

\node[circle, inner sep=2pt, fill=black, left=0.7cm of s1] (a1) {};
\node[circle, inner sep=2pt, fill=black, left=0.7cm of s0] (a2) {};

\draw[-] (s0) -- node[above] {$\act_1$} (a1);
\draw[->] (a1) -- (s1);

\draw[-] (s0) edge[bend right=13] node[above] {$\act_2$} (a2);
\draw[->] (a2) edge[bend right=13] node[right] {} (s0);

\draw ($(s0.north) - (0,-0.2)$) edge[->] (s0);

\end{tikzpicture}
\caption{MDP $\pomdp$}
\label{fig:nonZeroStrategy}
\end{figure}

\begin{proposition}
  Let $\pomdp$ be a POMDP.
  An instantiation $u$ on $\dtmc_{\pomdp,1}$ is graph-preserving ($\varepsilon$-preserving), iff $\sched_{\fsc_u}$ is non-zero (min-$\varepsilon$).
\end{proposition}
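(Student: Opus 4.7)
The plan is to exploit the direct correspondence of Theorem~\ref{theorem:great_theorem_of_correspondence} specialised to $k=1$. By Definition~\ref{def:pomdp_to_pmc_direct} with $k=1$, the range $[k{-}1] = \{0\}$ forces the memory-update parameter set to be empty (the condition $n'\neq k{-}1$ cannot be satisfied), so $\dtmc_{\pomdp,1}$ carries only the action-probability parameters $p^{\obs,0}_\act$ for $\act\neq\remaining(\obs)$. The first step is to make the bijection between instantiations and 1-FSCs concrete: for any state $s$ with $\ObsFun(s)=\obs$ and any $\act\in\ActS(\obs)$, the memoryless strategy probability is
\[
\sched_{\fsc_u}(s)(\act) \;=\; \begin{cases} u(p^{\obs,0}_\act), & \act\neq\remaining(\obs),\\ 1-\sum_{b\neq\remaining(\obs)} u(p^{\obs,0}_b), & \act=\remaining(\obs). \end{cases}
\]
This follows by reading off the action factor in $H$ (the memory-update factor collapses to $1$ when $k=1$) and matching it with the action mapping of $\fsc_u$.

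The second step is the equivalence for the graph-preserving case. For the ``if'' direction, assume $\sched_{\fsc_u}$ is non-zero. Then every action at every observation receives strictly positive probability, so every parameter $p^{\obs,0}_\act$ lies in $(0,1)$ and every remaining-action expression $1-\sum_{b\neq\remaining(\obs)} u(p^{\obs,0}_b)$ also lies in $(0,1)$. Since each transition polynomial of $\dtmc_{\pomdp,1}$ is a non-negative linear combination of the constants $P(s,\act,s')$ weighted by these strictly positive expressions, every polynomial that is not identically zero evaluates strictly positively, i.e.\ $u$ is graph-preserving. For the ``only if'' direction, suppose some $\sched_{\fsc_u}(s)(\act)$ equals zero. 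Then either a parameter $p^{\obs,0}_\act$ is $0$, or the remaining-action expression at $\obs$ is $0$. In either case, the factor by which every $P(s,\act,s')$-contribution of that action is multiplied vanishes, so any pMC-transition whose only contributing action is $\act$ (which exists in the non-degenerate cases guaranteed by the well-formedness assumption on $P$) collapses to zero, violating graph-preservation.

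The third step handles the $\varepsilon$-preserving/min-$\varepsilon$ equivalence by exactly the same reasoning, replacing ``$>0$'' by ``$\geq\varepsilon$'' throughout: each parameter and remaining-action expression being bounded below by $\varepsilon$ is equivalent to each action probability under $\sched_{\fsc_u}$ being at least $\varepsilon$, which is precisely the min-$\varepsilon$ condition; the transition-level bound then follows from the linearity of the combination.

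The main obstacle I expect is the ``only if'' direction of the graph-preserving case: one must isolate, for each parameter (or remaining-action expression) that becomes zero, a concrete pMC transition whose polynomial degenerates to zero. For non-remaining parameters this is straightforward since $p^{\obs,0}_\act$ factors out of the action-$\act$ contribution; for the remaining action one must identify a successor reached only via $\remaining(\obs)$. A clean way to discharge this is to argue at the parameter level directly -- noting that for simple pMCs the paper already identifies graph-preservation with parameters in $(0,1)$ -- and invoke the substitution from Definition~\ref{def:pomdp_to_pmc_substitute} when needed to avoid case-splits between ``true'' parameters and their remainders.
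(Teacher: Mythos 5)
Your first step (the $k{=}1$ specialisation and the explicit bijection between $u$ and the action probabilities of $\sched_{\fsc_u}$) is exactly the correspondence the paper relies on, and your ``if'' direction and the $\varepsilon$-case are fine. The genuine gap is in the ``only if'' direction of the graph-preserving case, and your proposed repairs do not close it. The w.l.o.g.\ assumption on $\probmdp$ only says that an action whose probability to some successor lies strictly between $0$ and $1$ has at least two successors; it does \emph{not} give you, for a zeroed action $\act$ at state $s$, a successor reachable \emph{only} via $\act$. Concretely: let $s$ have actions $a$ (Dirac to $s_1$) and $b$ (uniform on $s_1,s_2$), both enabled under the same observation. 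In $\dtmc_{\pomdp,1}$ the two outgoing polynomials of $\langle s,0\rangle$ are $p + 0.5(1{-}p)$ and $0.5(1{-}p)$; instantiating $p\mapsto 0$ keeps both transitions at $0.5>0$, so the instantiation is graph-preserving at the transition level while $\sched_{\fsc_u}$ assigns probability $0$ to $a$ and is not non-zero. So transition-level graph-preservation does not force non-zeroness whenever the support of the zeroed action is covered by the other actions' supports, and no appeal to the well-formedness assumption can exclude this.

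What actually makes the statement go through is the parameter-level reading of graph-preservation that the paper adopts: immediately before the proposition it identifies graph-preserving (resp.\ $\varepsilon$-preserving) instantiations with parameter values in $(0,1)$ (resp.\ $[\varepsilon,1-\varepsilon]$), and in the simple-pMC setting used throughout (after the binary/simple transformation each action's contribution sits on its own $p$/$1{-}p$ transition) the two readings coincide, so your step-one bijection already yields both directions with nothing further to prove. Your fallback to Def.~\ref{def:pomdp_to_pmc_substitute} is off-target here: for $k{=}1$ there are no $p\cdot q$ monomials to substitute, and the substitution does not undo the summation over actions that creates the covering problem above. So either state and prove the proposition for parameter-level graph-preservation (or for simple POMDPs), or add an explicit hypothesis ruling out the covering situation; as written, your transition-level ``only if'' argument fails.
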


Still, for the considered specifications, we can, w.\,l.\,o.\,g., restrict ourselves to FSCs that induce non-zero strategies.
\begin{theorem}
  Let $\pomdp$ be a POMDP, $k$ a memory bound and $\varphi=\p_{> \lambda} (\neg B\,\Until\,G)$.
  Either $\forall \fsc\in k\text{-FSC}: \pomdp^{\osched_\fsc} \not\models \varphi$ or
  $\exists \fsc'\in k\text{-FSC}: \pomdp^{\osched_{\fsc'}} \models \varphi$ with $\osched_{\fsc'}$ non-zero.
\end{theorem}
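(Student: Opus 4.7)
The plan is to exploit the strict inequality in $\varphi$ together with continuity of reachability probabilities in the FSC parameters. Assume the first disjunct fails, so there is some $\fsc \in \FSCs[k]{\pomdp}$ with $\pomdp^{\osched_\fsc} \models \varphi$. Let $v = \pr^{\pomdp^{\osched_\fsc}}(\neg B \Until G)$, so $v > \lambda$ and the slack $\varepsilon \coloneqq v - \lambda$ is strictly positive. The goal is to perturb $\fsc$ into a non-zero FSC $\fsc'$ whose induced value is still above $\lambda$.

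For the perturbation, I would convex-combine $\fsc$ with a fully randomising FSC. Concretely, for $\delta \in (0,1]$ define $\fsc_\delta = (N, \ninit, \actionMap_\delta, \nodeTransition_\delta)$ on the same memory by
\begin{align*}
\actionMap_\delta(n,z)(\act) &= (1-\delta)\,\actionMap(n,z)(\act) + \delta \cdot \tfrac{1}{|\ActS(z)|}, \\
\nodeTransition_\delta(n,z,\act)(n') &= (1-\delta)\,\nodeTransition(n,z,\act)(n') + \delta \cdot \tfrac{1}{|N|}.
\end{align*}
As convex combinations of probability distributions these are again distributions, so $\fsc_\delta \in \FSCs[k]{\pomdp}$, and for every $\delta \in (0,1]$ every enabled action receives mass at least $\delta / |\ActS(z)| > 0$; hence $\osched_{\fsc_\delta}$ is non-zero.

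It remains to pick $\delta$ small enough to preserve the specification. By the Correspondence Theorem~\ref{theorem:great_theorem_of_correspondence}, each $\fsc_\delta$ corresponds to a well-defined instantiation $u_\delta$ of the induced pMC $\MC_{\pomdp,k}$, and the map $\delta \mapsto u_\delta$ is linear (in particular continuous), so the transition probabilities of $\MC_{\pomdp,k}[u_\delta]$ depend polynomially on $\delta$. The until-probability in a finite MC is a rational function of its transition probabilities (e.g., the unique solution of the standard linear system on the "maybe" states), and hence continuous on the region of well-defined instantiations. Consequently $\delta \mapsto v_\delta \coloneqq \pr^{\pomdp^{\osched_{\fsc_\delta}}}(\neg B \Until G)$ is continuous at $0$ with $v_0 = v$. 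Choosing $\delta^* \in (0,1]$ small enough that $|v_{\delta^*} - v| < \varepsilon$ yields $v_{\delta^*} > \lambda$, so $\fsc' \coloneqq \fsc_{\delta^*}$ is the desired non-zero witness.

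The main obstacle is the continuity step: one must argue carefully that the until-probability is continuous (not just on the interior, where it is rational, but up to the boundary point corresponding to $\fsc$), because the original $\fsc$ may itself assign zero probability to some actions and thus lie on the boundary of the well-defined region. This is handled by noting that the until-probability is uniformly the least fixed point of a Bellman operator whose coefficients are polynomials in the transition probabilities, so it extends continuously to the closure of the well-defined region; approaching $\fsc$ along the one-parameter family $\fsc_\delta$ stays inside a compact set of valid MCs on which standard continuity of reachability applies. The strictness $>\lambda$ in $\varphi$ is essential, as the argument cannot rule out the boundary case where $v = \lambda$ exactly.
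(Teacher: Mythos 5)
Your overall strategy---perturb the witness FSC towards the uniform one and exploit the strictness of the threshold $>\lambda$---is the same as the paper's, which proves the statement via the corresponding pMC lemma on graph-preserving instantiations (Lemma~\ref{lem:graphpreserving}). The problem is the central analytic step: your claim that $\delta \mapsto v_\delta$ is continuous at $\delta=0$ with $v_0=v$, justified by saying the until-probability ``extends continuously to the closure of the well-defined region,'' is false in general, and neither the least-fixed-point characterisation nor compactness repairs it. The until-probability is a single rational function (and continuous) only on subregions where the support graph---equivalently the set $S_{=0}$ of states with until-probability zero---is fixed; at boundary instantiations, which is exactly where your original $\fsc$ may lie, it can be genuinely discontinuous. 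The paper's own Example~\ref{ex:nonzero} (Fig.~\ref{fig:nonZeroStrategy}) is a counterexample: giving action $\act_1$ any probability $\delta>0$ yields reachability probability $1$, while at $\delta=0$ the probability is $0$; so $|v_\delta - v_0|$ does not become small, and a $\delta^{*}$ with $|v_{\delta^{*}}-v|<\varepsilon$ need not exist.

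What rescues the argument---and what the paper actually uses---is that the discontinuity can only go one way: when transitions vanish, $S_{=0}$ can only grow, so the value \emph{at} the boundary point can only be lower than the values at nearby graph-preserving (non-zero) instantiations. Concretely, the until-probability is the pointwise supremum of the value-iteration polynomials, which have non-negative coefficients, hence it is lower semicontinuous; this gives $\liminf_{\delta\to 0^{+}} v_\delta \geq v > \lambda$, and that one-sided estimate already produces some $\delta^{*}$ with $v_{\delta^{*}}>\lambda$. So your family $\fsc_\delta$ and the use of strictness are fine, but the two-sided continuity claim must be replaced by this one-sided argument (or by the paper's explicit $S_{=0}$ reasoning); as written, the step ``choose $\delta^{*}$ with $|v_{\delta^{*}}-v|<\varepsilon$'' rests on a false premise, even though the failure happens only in the direction that is harmless for the conclusion.
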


The theorem is a consequence of the following corresponding statement for pMCs.

\begin{lemma}
	\label{lem:graphpreserving}
	For pMC $\dtmc$ and $\varphi=\p_{> \lambda} (\neg B\,\Until\,G)$, either $\dtmc[u] \not\models \varphi$ for all well-defined instantiations $u$, or $\dtmc[u] \models \varphi$ for some graph-preserving instantiation.
\end{lemma}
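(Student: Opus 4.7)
The plan is to take any well-defined instantiation $u$ with $\dtmc[u] \models \varphi$ and perturb it slightly toward a fixed graph-preserving reference instantiation, arguing that the reach-avoid probability changes continuously so that the perturbed point still satisfies $\varphi$.

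First I would fix a graph-preserving well-defined reference $u_0$. Such a $u_0$ exists for the pMCs of interest (for instance, for pMCs arising via Def.~\ref{def:pomdp_to_pmc_direct} or Def.~\ref{def:pomdp_to_pmc_substitute}, assigning each parameter the uniform value $1/k$ for the $k$-many choices in its local distribution produces one). Given a well-defined $u$ with $\Pr^{\dtmc[u]}(\neg B\,\Until\,G) > \lambda$, consider the one-parameter family $u_\varepsilon := (1-\varepsilon)\,u + \varepsilon\,u_0$ for $\varepsilon \in [0,1]$. For the pMCs induced from POMDPs, each entry $P(s,s')$ is affine in the parameters, so $P(s,s')[u_\varepsilon] = (1-\varepsilon)\,P(s,s')[u] + \varepsilon\,P(s,s')[u_0]$: the entries are non-negative and the rows of $P[u_\varepsilon]$ sum to $(1-\varepsilon) + \varepsilon = 1$, hence $u_\varepsilon$ is well-defined. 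Moreover, if $P(s,s')$ is not identically zero then $P(s,s')[u_0] > 0$, and combined with $P(s,s')[u] \geq 0$ this gives $P(s,s')[u_\varepsilon] > 0$ for every $\varepsilon \in (0,1]$, i.e.\ $u_\varepsilon$ is graph-preserving.

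It remains to invoke continuity of $\varepsilon \mapsto \Pr^{\dtmc[u_\varepsilon]}(\neg B\,\Until\,G)$ at $\varepsilon = 0$. On the open interval $(0,1]$ the underlying graph is constant (namely, the graph of $\dtmc[u_0]$), and the reach-avoid probability is the unique solution of a linear system whose coefficients depend polynomially on $\varepsilon$; standard results imply it is a continuous (indeed rational) function of $\varepsilon$. Because the value at $\varepsilon = 0$ (viewed as the limit from the right) exceeds $\lambda$, there is some $\varepsilon^* > 0$ with $\Pr^{\dtmc[u_{\varepsilon^*}]}(\neg B\,\Until\,G) > \lambda$, and $u_{\varepsilon^*}$ is the desired graph-preserving, well-defined instantiation satisfying $\varphi$.

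The main obstacle is the right-continuity at $\varepsilon = 0$, since transitions whose polynomials vanish under $u$ become positive for $\varepsilon > 0$, so the graph of $\dtmc[u_\varepsilon]$ strictly extends that of $\dtmc[u]$. I would address this by showing that the probability mass along any newly enabled transition is at most $\varepsilon$ times a bounded quantity, so the contribution of paths using such transitions is $O(\varepsilon)$; combined with continuity on the constant-graph region $\varepsilon \in (0,1]$ this sandwiches $\Pr^{\dtmc[u_\varepsilon]}(\neg B\,\Until\,G)$ between continuous bounds that agree at $\varepsilon = 0$ with $\Pr^{\dtmc[u]}(\neg B\,\Until\,G)$, which yields the required one-sided continuity and completes the proof.
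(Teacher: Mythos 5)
Your overall plan is the same as the paper's: perturb a satisfying well-defined instantiation $u$ towards a graph-preserving one and argue that the reach-avoid value cannot drop in the limit (the paper phrases this via the set $S_{=0}$ of states with value zero, which is smallest under graph-preserving instantiations). However, the step you yourself identify as the main obstacle is settled with an argument that is false. The claim that the total probability mass routed through newly enabled transitions is $O(\varepsilon)$, and the resulting sandwich between continuous bounds that agree with $\pr^{\dtmc[u]}(\neg B\,\Until\,G)$ at $\varepsilon=0$, amount to two-sided continuity of $\varepsilon\mapsto\pr^{\dtmc[u_\varepsilon]}(\neg B\,\Until\,G)$ at $\varepsilon=0$, and this fails: take the simple pMC with $P(s_0,g)=p$, $P(s_0,s_0)=1-p$, goal $\{g\}$, with $u(p)=0$ and $u_0(p)=1/2$. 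Under $u_\varepsilon$ the expected number of visits to $s_0$ is of order $1/\varepsilon$, the probability of eventually taking the newly enabled transition is $1$ (not $O(\varepsilon)$), and the value jumps from $0$ at $\varepsilon=0$ to $1$ for all $\varepsilon>0$. The jump here happens to go in the harmless direction, but it refutes the only mechanism you offer for the crucial step; the "bounded quantity" (an expected visit count) is not bounded uniformly in $\varepsilon$ precisely at the degenerate instantiations the lemma is about.

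What you actually need is only the one-sided bound $\liminf_{\varepsilon\to 0^+}\pr^{\dtmc[u_\varepsilon]}(\neg B\,\Until\,G)\geq\pr^{\dtmc[u]}(\neg B\,\Until\,G)$, and this requires a different argument. A short one: the until probability is the supremum over $n$ of the probabilities of reaching $G$ within $n$ steps while avoiding $B$; each of these is a polynomial in the transition probabilities, hence continuous in the instantiation, and a pointwise supremum of continuous functions is lower semicontinuous, so choosing $n$ with step-bounded value above $\lambda$ at $u$ and then $\varepsilon$ small finishes the proof. Alternatively, argue as the paper's sketch does: a state whose expected visit count blows up as $\varepsilon\to 0$ must belong to $S_{=0}$ under $u$ and contributes nothing to the value at $u$, while for the remaining states the diverted mass really is $O(\varepsilon)$. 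Two smaller points: for the induced pMC of Def.~\ref{def:pomdp_to_pmc_direct} the transition entries are products of parameters (e.g.\ $0.6\cdot p\cdot q_1$), not affine, so well-definedness of $u_\varepsilon$ should be argued on the parameter simplices (or on simple/substituted pMCs) rather than entrywise; and for an arbitrary pMC a graph-preserving well-defined $u_0$ need not exist at all, so your restriction to the POMDP-induced (or simple) class is genuinely needed --- the paper is equally informal on both counts.
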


\begin{proof}
Assume pMC $\pdtmc$ and $\varphi=\p_{> \lambda} (\neg B\,\Until\,G)$.
We have to show that either $\pdtmc[u] \not\models \varphi$ for all well-defined instantiations or $\pdtmc[u] \models \varphi$ for some graph-preserving instantiation.

Let $f$ be a function that maps a well-defined instantiation $u$ to the probability $\pr^{\dtmc[u]}(\neg B\,\Until\,G)$.
The essential idea is that the only reason for a discontinuity of $f$ is a change in the set $S_{=0}$---states in the pMC from which the probability to reach
the target is zero.
The number of states in $S_{=0}$ is the smallest under a graph preserving assignment.
A discontinuity of $f$ thus implies a reduced reachability probability (there are more states in $S_{=0}$).
\end{proof}

As a consequence, if we have to construct an instantiation which reaches
a goal with probability $>\kappa$, we can look for such an instantiation among the graph-preserving ones.
Moreover, the lemma also implies that the set of states $\states_{=0}$ can be precomputed.

\section{Alternative FSCs}\label{sec:alternative_fscs}

In the literature, several formalisms for FSCs occur.
In particular, ~\cite{ChatterjeeCGK16,meuleau1999solving,BK08} do not agree upon a common model.
We discuss the applicability of our results with respect to the different variants of FSCs.

\paragraph{Ignoring the Taken Action for Updates.}
In~\cite{meuleau1999solving,BK08}, the memory update is of the form
$\nodeTransition'\colon N \times Z \rightarrow \Distr(N)$. The update is a restriction of the
FSCs in this paper, represented by the constraint $\nodeTransition(n,\obs,\act_1) = \nodeTransition(n,\obs,\act_2)$.
The constraint yields dependencies between different actions, preventing the $k$-unfolding as in Def.~\ref{def:unfolding}.
\begin{example}
	Recall Ex.~\ref{ex:fsctopmc}, with the induced pMC for the POMDP fragment, as also given in Tab.~\ref{tab:fsc_example_tab}.
	Tab.~\ref{tab:fsc_tab_action_restricted} presents the induced pMC with the restriction in place. Notice that we have no parameter $q_2$ anymore.
	Based on Tab.~\ref{tab:fsc_example_tab} we could set  $p' = 0.5, q_1 = 0$, and the following transition probabilities for each target: $\{ \langle s_2, n_1 \rangle \mapsto 0, \langle s_2, n_2 \rangle \mapsto 0.3, \langle s_4, n_1 \rangle \mapsto 0.35  \}$.
	Based on Tab.~\ref{tab:fsc_tab_action_restricted}, this assignment is not possible.
\end{example}
We conclude from the example above that we get an additional parameter dependency.

\begin{definition}[Action-Restricted Induced pMC]
\label{def:pomdp_to_pmc_action_restricted}
	Reconsider Def.~\ref{def:pomdp_to_pmc_direct}.
        We define $\pDtmcInitMkRestr$ by modifying $\VarMk$ and $H^\mathrm{restr}$ as follows:
	\begin{itemize}
	\item $\VarMkRestr= \{ p^{\obs,n}_\act ~|~ \obs \in \ObsSym, n \in [k{-}1], \act \in \ActS(\obs), \act \neq \remaining(\obs) \}$ \\ $\cup \{ q^{\obs,n}_{n'} ~|~ n,n' \in [k{-}1], n' \neq k-1, \obs \in \ObsSym\}$
	\item $H^{\mathrm{restr}}\bigl(\langle s, n \rangle, \langle s', n' \rangle, \act\bigr)= {}$
        \begin{align*}
        \probmdp(s,\act,s') & \cdot \left\{\begin{array}{lr}
        p^{z,n}_\act, & \text{if } \act {\neq} \remaining\bigl(z\bigr)\\
        1-\sum\limits_{\act'\neq \act}p^{z,n}_{\act'}, & \text{if } \act {=} \remaining\bigl(z\bigr)
        \end{array}\right\} \\
       &\cdot \left\{\begin{array}{lr}
        q^{z,n}_{n'}, & \text{if } n'{\neq}k{-}1\\
        1-\sum\limits_{\bar{n}\neq n'}q^{z,n}_{\bar{n}}, & \text{if } n'{=}k{-}1
        \end{array}\right\}
    \end{align*}        with $z = \ObsFun(s)$
     \item $P_{\pomdp,k}^{\mathrm{restr}}(s,s') = \sum_{\act \in \ActS(s)} H^{\mathrm{next}}(s,s',\act)$ for all $s,s'\in S_{\pomdp,k}$.
    \end{itemize}
	The obtained pMC is then called the action-restricted induced pMC.
\end{definition}

For these pMCs, we can no longer perform the substitution as proposed in Def.~\ref{def:pomdp_to_pmc_substitute}.
As a consequence this restriction breaks the proposed unfolding.
\begin{table}[t]
\centering
\caption{Alternative induced pMCs}
\subfigure[Action restricted]{
\label{tab:fsc_tab_action_restricted}
\scalebox{0.8}{
\begin{tabular}{|l|l|l|l|l|}
\hline
Obs                    & Act                        & P                 & Node   & Result \\ \hline\hline
\multirow{8}{*}{$\obs_1$} & \multirow{4}{*}{$\act_1:p'$}  & \multirow{2}{*}{0.6} & $n_1:q_1$ &  $0.6 \cdot p' \cdot q_1$      \\ \cline{4-5}
                       &                            &                      & $n_2:1-q_1$ &   $0.6 \cdot p' \cdot (1{-}q_1)$     \\ \cline{3-5}
                       &                            & \multirow{2}{*}{0.4} & $n_1:q_1$ &  $0.4 \cdot p' \cdot q_1$      \\ \cline{4-5}
                       &                            &                      & $n_2:1{-}q_1$ &  $0.4 \cdot p' \cdot (1{-}q_1)$      \\ \cline{2-5}
                       & \multirow{4}{*}{$\act_2:1{-}p'$} & \multirow{2}{*}{0.7} & $n_1:\mathbf{\color{red}q_1}$ &  $0.7 \cdot (1{-}p') \cdot q_1$      \\ \cline{4-5}
                       &                            &                      & $n_2:\mathbf{\color{red}1{-}q_1}$ & $0.7 \cdot (1{-}p') \cdot (1{-}q_1)$       \\ \cline{3-5}
                       &                            & \multirow{2}{*}{0.3} & $n_1:\mathbf{\color{red}q_1}$ &  $0.3 \cdot (1{-}p') \cdot q_1$      \\ \cline{4-5}
                       &                            &                      & $n_2:\mathbf{\color{red}1{-}q_1}$ &  $0.3 \cdot (1{-}p') \cdot (1{-}q_1)$      \\ \hline
\end{tabular}
}
}
\subfigure[Next observation dependent]{
\scalebox{0.8}{
\begin{tabular}{|l|l|l|l|l|}
\hline
Obs                    & Act                        & P                 & Node   & Result \\ \hline\hline
\multirow{8}{*}{$\obs_1$} & \multirow{4}{*}{$\act_1:p'$}  & \multirow{2}{*}{0.6} & $n_1:\mathbf{\color{red}q_1}$ &  $0.6 \cdot p' \cdot q_1$      \\ \cline{4-5}
                       &                            &                      & $n_2:\mathbf{\color{red}1{-}q_1}$ &   $0.6 \cdot p' \cdot (1{-}q_1)$     \\ \cline{3-5}
                       &                            & \multirow{2}{*}{0.4} & $n_1:q_2$ 				&  $0.4 \cdot p' \cdot q_2$      \\ \cline{4-5}
                       &                            &                      & $n_2:1-q_2$ 				&  $0.4 \cdot p' \cdot (1{-}q_2)$      \\ \cline{2-5}
                       & \multirow{4}{*}{$\act_2:1{-}p'$} & \multirow{2}{*}{0.7} & $n_1:\mathbf{\color{red}q_1}$ &  $0.7 \cdot (1{-}p') \cdot q_1$      \\ \cline{4-5}
                       &                            &                      & $n_2:\mathbf{\color{red}1{-}q_1}$ & $0.7 \cdot (1{-}p') \cdot (1{-}q_1)$       \\ \cline{3-5}
                       &                            & \multirow{2}{*}{0.3} & $n_1:\mathbf{\color{red}q_1}$ &  $0.3 \cdot (1{-}p') \cdot q_1$      \\ \cline{4-5}
                       &                            &                      & $n_2:\mathbf{\color{red}1{-}q_1}$ &  $0.3 \cdot (1{-}p') \cdot (1{-}q_1)$      \\ \hline
\end{tabular}
}
}
\end{table}

\paragraph{Taking the Next Observation into Account.}
In this paper, the memory node update in FSCs depends on the observation at the state \emph{before} executing the action.
Instead, the update may also be based on the observation \emph{after} the update~\cite{meuleau1999solving}.
This notion introduces dependencies between actions from states with different observations that reach the same observation.
\begin{example}
  Recall Ex.~\ref{ex:fsctopmc}, with the induced pMC for the POMDP fragment, as also given in Tab.~\ref{tab:fsc_example_tab}.
  Tab.~\ref{tab:fsc_tab_action_restricted} presents the induced pMC with the restriction in place. Notice that the memory update probabilities now depend on the observation of the resulting state. In particular, the action probability depends on the current observation, and features dependencies between source states, while the memory update features dependencies between target states.
\end{example}
\begin{definition}[Next-observation induced pMC]
    \label{def:pomdp_to_pmc_next}
    Reconsider Def.~\ref{def:pomdp_to_pmc_direct}.
    We define $\pDtmcInitMkNext$ by:
    \begin{itemize}
    \item Extending the set of variables of $\VarMk$,  $\VarMkNext = $ \begin{align*} 
& \bigl\{ p^{\obs,n}_\act ~\big|~ \obs \in \ObsSym, n \in [k{-}1], \act \in \ActS(\obs), \act \neq \remaining(\obs) \bigr\}\\ & \cup  \bigl\{ q^{\obs,n}_{\act,n'} ~\big|~\obs \in \ObsSym,  n,n' \in [k{-}1], n' {\neq} k{-}1, \act \in \ActS(z) \bigr\}
 \end{align*}
    \item Using the second set of variables for the  memory update: \\ $H^{\mathrm{next}}\bigl(\langle s, n \rangle, \langle s', n' \rangle, \act\bigr)= {}$
        \begin{align*}
        \probmdp(s,\act,s') & \cdot
        \left\{\begin{array}{lr}
           p^{z,n}_\act, & \text{if } \act {\neq} \remaining\bigl(z\bigr)\\
           1-\sum\limits_{\act'\neq \act}p^{z,n}_{\act'}, & \text{if } \act {=} \remaining\bigl(z\bigr)
        \end{array}\right\} \\
        & \cdot
        \left\{\begin{array}{lr}
           q^{z',n}_{\act,n'}, & \text{if } n'{\neq}k{-}1\\
           1-\sum\limits_{\bar{n}\neq n'}q^{z',n}_{\act,\bar{n}}, & \text{if } n'{=}k{-}1
        \end{array}\right\}
    \end{align*}
    with $z = \ObsFun(s), z' = \ObsFun(s')$, and
    \item $\begin{aligned}[t]P_{\pomdp,k}^{\mathrm{next}}(s,s') = \sum_{\act \in \ActS(s)} H^{\mathrm{next}}(s,s',\act) \text { for all } s,s'\in S'\end{aligned}$.
    \end{itemize}
    The obtained pMC is then called the \emph{next-induced pMC}.
\end{definition}

Due to the dependencies, we cannot substitute monomials, and we cannot simply unfold the memory into the POMDP.

We observe that compared to taking the next observation into account, the defined FSC lags behind, and needs an additional step.
Thus, we can modify the POMDP to give the memory structure time to update, by introducing an intermediate state with an adequate observation after every action.
On the transformed POMDP, this alternative FSC behaves as the FSC considered in the rest of the paper.

\paragraph{Ignoring the Current Observation when Selecting the Action.}
In~\cite{ChatterjeeCGK16}, the action mapping is modeled as $\actionMap'\colon N \rightarrow \Distr(\Act)$,
which restricts our FSC to $\actionMap(n, \obs) = \actionMap(n, \obs')$.
This type of FSC is more general in the sense that it can assign memory usage more freely
than the rather uniform assignment used here. In particular, a model with one memory node is
now not memoryless anymore, but weaker (it has to select the same action distribution
regardless of the observation). It also contains some restrictions: In particular, every POMDP
state requires the same action set.
Therefore, this model is not compatible with our framework.

\section{Permissive Strategies}
\label{sec:solving:lifting}\label{sec:permissive}

\emph{Permissive strategies} are sets of strategies satisfying a specification for MDPs~\cite{DBLP:journals/corr/DragerFK0U15,junges-et-al-tacas-2016}.
One interesting application of permissive strategies is \emph{robustness} in the sense that we can assess if a slight change to a strategy will preserve the satisfaction of the specification.
For observation-based strategies, a permissive scheduler may be defined as a function $\pathsfin^{\mdp}\to 2^{\Distr(\Act)}$ mapping paths to (infinite) sets of distributions over actions.
Here, we restrict ourselves to sets of $k$-FSCs.
\begin{definition}
  A \emph{permissive $k$-FSC} for a POMDP $\pomdp$ and a specification $\varphi$ is a subset $\PFSC[k]{\pomdp}{\varphi} \subseteq \FSCs[k]{\pomdp}$
  such that $\pomdp^{\sched_\fsc} \models \varphi$ for each FSC $\fsc \in \PFSC[k]{\pomdp}{\varphi}$.
\end{definition}
Using the correspondence from Theorem~\ref{theorem:great_theorem_of_correspondence}, we transfer the notion of permissive schedulers to \emph{sets of parameter instantiations} that satisfy a given specification on the induced pMCs.
We aim at finding (preferably large) regions of well-defined instantiations that all satisfy the given specification.

In parameter synthesis, similar problems are typically tackled in two steps.
We first find a suitable candidate permissive $k$-FSC. Then, we prove that the candidate indeed is an permissive $k$-FSC.
The latter problem is a de-facto universal quantification problem, which can be rephrased such that the objective is to prove the absence of an instantiation which does not satisfy the specification.
Finding a suitable candidate FSC corresponds to finding a region, which is typically guided by sampling.
Thus, we propose to first find several $k$-FSCs that satisfy the specification, and take a region containing these FSCs and no FSCs that do not satisfy the specification as a candidate.
We leave dedicated techniques to finding permissive FSC candidates as future work.

\section{Empirical Evaluation}
\label{sec:evaluation}

We established the correspondence between the synthesis problems for POMDPs and pMCs.
Now, we discuss the available methods for pMC parameter synthesis, and how they may be exploited or adapted to synthesise FSCs.
We distinguish three key problems:
\begin{enumerate}[nosep,itemindent=0pt,leftmargin=0pt,itemindent=3em]
 \item
\emph{Find a correct-by-construction strategy for a POMDP and a specification.}
To construct such a strategy, one needs to find a parameter valuation for the pMC that provably satisfies the specification.
Most solution techniques focused on pMCs with a few parameters, rendering the problem at hand infeasible.
Recently, efficient approaches emerged that are either based on particle swarm optimisation (PSO)~\cite{DBLP:conf/tase/ChenHHKQ013} or on convex optimisation~\cite{amato2010optimizing,DBLP:conf/tacas/Cubuktepe0JKPPT17}, in particular using quadratically-constrained quadratic programming (QCQP)~\cite{cubuktepe-et-al-atva-2018}.
We employ PSO and QCQP for our evaluation.

\item \emph{Prove that no FSC exists for a POMDP and a specification.}
Proving the absence of an FSC with the given memory bound allows us to show $\varepsilon$-optimality of a previously synthesised strategy.
Two approaches exist: An approximative technique called \emph{parameter lifting}~\cite{QDJJK16} and a method based on SAT-modulo-theories (SMT) solving~\cite{dMB08}.

\item \emph{Provide a closed-form solution} for the underlying measure of a specification in form of a function over the induced parameters of an FSC.
The function may be used for further analysis, e.\,g.\ of the sensitivity of decisions or parameter values, respectively.
To compute this function, all of the parameter synthesis tools \tool{PARAM}~\cite{param_sttt}, \tool{PRISM}~\cite{KNP11}, \storm~\cite{DBLP:conf/cav/DehnertJK017}, and \prophesy~\cite{DJJ+15} employ a technique called \emph{state elimination}~\cite{Daw04}.
\end{enumerate}

\paragraph{Implementation and Setup.}
We extended the tool \storm~\cite{DBLP:conf/cav/DehnertJK017} to parse and store POMDPs, and implemented several transformation options to pMCs.
Most notably, \storm supports $k$-unfolding, the product with several restricted FSCs such as counters that can be incremented at will, and several types of transformation to (simple) pMCs.

We evaluated on a HP BL685C G7 with 48 2\,GHz cores, a 16\,GB memory limit, and 1800~seconds time limit.
The compared methods are single-threaded.
We took the POMDPs from PRISM-POMDP~\cite{NPZ17}, additional maze, load/unload examples from~\cite{meuleau1999solving}
and a slippery gridworld with traps inspired by~\cite{DBLP:books/daglib/0023820}.
Table~\ref{tab:instances} gives details.
The specifications (Tp.) either ask to minimise  expected costs from an initial state until reaching a specified target set ($\mathbb{E}$)
or maximising the probability of reaching from an initial state a target set without hitting a bad state before ($\mathbb{P}$).
We list the number of states, branches ($\sum |A(s)|$) and observations in each POMDP.
As a baseline, we provide the results and run time of the model-checking tool PRISM-POMDP, and the point-based solver SolvePOMDP~\cite{DBLP:conf/aaai/WalravenS17}, obtained with default settings.
Both tools compute optimal memory-unbounded strategies and are prototypes.
The last column contains the result on the underlying fully observable MDP.
The experiments contain minimal expected rewards, which are analysed by a straightforward extension of maximal reachability probabilities.
All pMCs computed are simple pMCs, as \prophesy typically benefits from the simpler structure. \prophesy has been invoked with the default set-up.

\begin{table}[tb]
\centering
\setlength\tabcolsep{2pt}
\caption{Benchmarks}
\label{tab:instances}
\scalebox{0.9}{
{\scriptsize
\begin{tabular}{|lll|rrr|rr|rr|r|}
\hline
\multicolumn{3}{|c|}{} & \multicolumn{3}{c|}{POMDP $\pomdp$} & \multicolumn{2}{c|}{PRISM-POMDP}  & \multicolumn{2}{c|}{SolvePOMDP} & MDP  \\
Id & Name       & Tp.	& States & Bran.  & Obs. & Result & Time & Result & Time &  Res\\\hline\hline
 1  & NRP (8)  & $\mathbb{P}$	& 125       &  161          &  41   &  $[.125,.24]$    &  20     &   & \TO  & 1.0 \\
 2  & Grid (4)  &$\mathbb{E}$	&    17    &  62            & 3    &  $[3.97,4.13]$    &  $1038$     & $4.13$ &  $0.4$ & 3.2\\
 3  & Netw (3,4,8) & $\mathbb{E}$   	&   2729            &  4937     &  361   &       &    \TO   &  & \TO  &  0.83 \\
 4  & Crypt (5)  &  $\mathbb{P}$	&   4885   &  11733    &  890     &        &    \MO   &      & \TO  & 1.0\\
 5  & Maze (2)   &   $\mathbb{E}$     &   16     & 58              & 8    &  $[5.11,5.23]$    &    $3.9 $  & $5.23$ & 16  & 4.0\\
 6  & Load (8)  & $\mathbb{E}$ 	&    16    &   28            & 5    &     $[10.5,10.5]$   &    $1356$   & $10.5$ & 7.6  & 10.5\\
 7  & Slippery (4) &  $\mathbb{P}$  	&   17     &  59          &  4   &       &  \TO      &  0.93 & 95  & 1.0\\
\hline
\end{tabular}
}}
\end{table}

\subsection{Finding strategies}
\label{sec:evaluation:finding}

We evaluate how quickly a strategy that satisfies the specification is synthesised.
We vary the threshold used in the specification, as well as the structure of the FSC.

\paragraph{Results.}
\begin{table}[tb]
\centering
\setlength\tabcolsep{1.9pt}
\caption{Synthesing strategies}
\label{tab:finding-strategies}
\scalebox{0.9}{
\scriptsize{
\begin{tabular}{|cc|llll|rr|rr|rr|}
\hline
Id$\!\!$                 & Ts                         & FSC/$k$ & States & Trans & Pars & \multicolumn{2}{c|}{T1}         & \multicolumn{2}{c|}{T2}         & \multicolumn{2}{c|}{T3} \\
                   &                            &     &           &       &     & pso & \multicolumn{1}{l|}{qcqp} & pso & \multicolumn{1}{l|}{qcqp} & pso       & qcqp       \\ \hline\hline
\multirow{4}{*}{1} & \multirow{4}{*}{.124/.11/.09} &  F/1  & 75   & 118 & 8 & $<$1 & $<$1  & $<$1 & $<$1 &    $<$1    & $<$1        \\
                   &                            &   F/2 &  205      &  420     &   47                        &   2  &        $<$1                   &  2   &     $<$1                     &       2    &  $<$1         \\
                   &                            &   F/4  &  921      &  1864     &  215                         &   9  &       2                    &  9   &   2                         &  10         &  2          \\
                   &                            &   F/8  &   3889     &  7824     &   911                        &   43  &    15                 &  42   &   14    & 42 & 14 \\\hline
\multirow{4}{*}{2} & \multirow{4}{*}{4.15/4.5/5.5} &  F/1  & 47   & 106 & 3 & -- & --  & -- & --  &  \Err      & $<$1        \\
                   &                            &   F/2 &  183      &  390     &   15                        &   7.4  &        11                   &   4  &     9                      &    2       &   $<$1         \\
                   &                            &   F/4  &  719      &  1486     &  63                         &  \TO   &       64                    &  39   &   91                         &  14         &  8          \\
                   &                            &   F/8  &   2845     &  5788     &   255                        & \TO    &     700                      & \TO     &   946    & 254 & 69 \\\hline
\multirow{4}{*}{3} & \multirow{4}{*}{9/10/15} &  F/1  & 3268   & 13094 & 276 & \TO & \TO  & \TO & 43 &  22      & 4       \\
                   &                            &   F/2 &  16004      &  46153     &   1783                        &   \TO  &        \TO                   &   \TO   &     877                      &  152         &   28         \\\cline{3-12}
                  & &   C/2 &  11270      &  36171     &   1168                        &   \TO  &        \TO                   &   \TO   &     358                      &  100         &   62         \\
                  & &   C/4 &  27183      &  82145     &   2940                        &  \TO   &        \MO                   &  \TO    &     \MO                      &  476         &   \MO         \\\hline
\multirow{2}{*}{4} & \multirow{2}{*}{.249/.2/.15} &  F/1  & 3366   & 6534 & 364 & 18 & 25  & 18 & 15 &   18     & 12       \\
                   &                            &   F/2 &  25713      &  51608     &   3907                        &   330  &        \MO                   &  350    &     \MO                      &  326         &   \MO         \\\hline
\multirow{6}{*}{5} & \multirow{6}{*}{5.2/15/25} &  F/1  &  30     & 64 			& 8 				& -- & --  & \TO  & \TO  & $<$1        &  \TO        \\
                   &                            &   F/2 &  137      &  294     		&   49              & \TO    & \TO   &  14   &     \TO                      &  2         &  \TO          \\
                   &                            &   F/4  &  587      &  1214     		&  219              & 93    & \TO    &  \TO   &   \TO                         &    26       &   \TO         \\
                   &                            &   F/8  &   2421     &  4924     	&   919             & \TO    & \TO     & 1034    &  \TO     & 115 & \TO \\\cline{3-12}
                    &                            &   C/2  &   99     &  212     	&   33             & \TO    & \TO     & 3.7    &   \TO    & $<$1   & \TO \\
                     &                            &   C/4  &   231     &  476     	&   81             & 7    & \TO     & 6    &  \TO     &  3 & \TO  \\\hline
\multirow{3}{*}{6} & \multirow{3}{*}{10.6/10.9/82.5} &  F/1  & 16   & 33 & 1 & -- & --  & -- & -- &   $<$1     & \TO       \\
                   &                            &   F/2 &  77      &  160     &   11                        &   9  &         \TO                  &  6    &     \TO                      &  $<$1         &  \TO          \\
                    &                            &   F/4 &  354      &  721     &   63                        &   20  &    \TO                       &  21    &            63               &  3         &   \TO         \\\hline
\multirow{4}{*}{7} & \multirow{4}{*}{.929/.928/.927} &  F/1  & 87   & 184 & 3 & \TO & \TO  & $<$1 & 1 & $<$1       & $<$1        \\
                   &                            &   F/2 &  285      &  592     &   15                        &    4 &        \TO                   &    4 &     20                      &           3 &   22         \\
                   &                            &   F/4  &  1017      &  2080     &  63                       &   76  &       767                    & 71    &   205                         &     67      &  187          \\
                   &                            &   F/8  &   3825     &  7744     &   255                     &   \TO  &     \TO                      &  \TO   &   \TO    &  \TO & \TO \\\hline
\end{tabular}}}
\end{table}

We summarise the obtained results in Tab.~\ref{tab:finding-strategies}.
For each instance (Id), we define three thresholds (\textsf{Ts}), ordered from challenging (i.\,e.\ close to the optimum) to less challenging.
For different types of FSCs (FSC, F=full, C=counter) and memory bounds ($k$), we obtain pMCs with the given number of states, transitions and parameters. Full-FSCs are fully connected, in counter-FSCs memory node $m$ is succeeded by either $m$ or $m+1$.
For each threshold ($\textsf{T1}$, $\textsf{T2}$, $\textsf{T3}$), we report the run time of the two methods PSO and QCQP, respectively.
\textsf{T1} is chosen to be nearly optimal for all benchmarks.
A dash indicates a combination of memory and threshold for which no FSC exists, according to the results in Sect.~\ref{sec:evaluation:proving}.
TO/MO denote violations of the time/memory limit, respectively.

\paragraph{Evaluation.}
Strategies for thresholds which are suboptimal (\textsf{T3}) are synthesised faster.
If the memory bound is increased, the number of parameters quickly grows and the performance of the methods degrades.
Additional experiments showed that the number of states has only a minor effect on the performance.
The simpler FSC topo\-logy for a counter alleviates the blow-up of the pMC and is successfully utilised to find strategies for larger instances.

Trivially, a $k$-FSC is also a valid $(k{+}i)$-FSC for some $i \in \mathbb{N}$. Yet, the larger number of parameters make searching for $(k{+}i)$-FSCs significantly more difficult.
We furthermore observe that the performance of PSO and QCQP is incomparable, and both methods have their merits.

Summarising, many of the POMDPs in the benchmarks allow good performance via FSCs with small memory.
\textbf{We find nearly-optimal, and small, FSCs for POMDP benchmarks with thousands of states within seconds.}

\subsection{Proving \boldmath $\varepsilon$-Optimality}
\label{sec:evaluation:proving}
We now focus on evaluating how fast pMC techniques prove the absence of a strategy satisfying the specification.
In particular, we consider proving that for a specific threshold, no strategy induces a better value.
Such a proof allows us to draw conclusions about the
\mbox{($\varepsilon$-)}
optimality of a strategy synthesised in Sect.~\ref{sec:evaluation:finding}.

\paragraph{Results.}
\begin{table}[]
\centering
\caption{$\varepsilon$-optimality and closed-form computation}
\subtable[Proving absence]{
\label{tab:absence}
\scalebox{0.9}{
\scriptsize{
\begin{tabular}{|llrr|}
\hline
 Id & FSC/$k$ & T & time       \\\hline\hline
  2 & F/1   & 5    & $<$1    \\\hline
  3 & F/1   & 5    & 8       \\
  3 & F/4   & 5    & 183     \\\hline
  4 & F/1   & 0.25 & 2$^{*}$ \\\hline
  5 & F/1   & 10   & 3       \\
  5 & F/2   & 5    & \TO     \\\hline
  6 & F/1   & 82   & $<$1    \\
  6 & F/8   & 10.5 & 1       \\\hline
  7 & F/1   & 0.94 & 5       \\\hline
\end{tabular}
}}
}\qquad
\subtable[Closed-form solution]{
\label{tab:closedform}
\quad\scalebox{0.9}{
\scriptsize{
\begin{tabular}{|llr|}
\hline
 Id & FSC/$k$ & time \\\hline\hline
  1 & F/1   & $<$1 \\
  1 & F/2   & 97   \\\hline
  2 & F/1   & 155  \\\hline
  3 & F/1   & 464  \\\hline
  4 & F/1   & $<$1 \\\hline
  5 & F/1   & 116  \\\hline
  6 & F/1   & $<$1 \\\hline
  7 & F/1   & \TO  \\\hline
\end{tabular}
}
}\quad}
\end{table}

Table~\ref{tab:absence} shows the run times to prove that for the POMDP in column Id, there exists no strategy of type FSC with $k$ memory that performs better than threshold $T$.
The row indicated by $^{*}$ was obtained with SMT.
All other results were obtained with parameter lifting.
\paragraph{Evaluation.}
The methods generally prove tight bounds for $k{=}1$. For $k{>}1$, the high number of parameters yields a mixed impression, the performance depends on the benchmark.
\textbf{We find proofs for non-trivial bounds up to \boldmath $k=8$, even if the pMC has hundreds of parameters.}

\subsection{Closed-form solutions}
\paragraph{Results.}
Table~\ref{tab:closedform} indicates run times to compute a closed-form solution, i.\,e.\ a rational function that maps $k$-FSCs to the induced probability.

\paragraph{Evaluation.}
Closed form computation is limited to small memory bounds.
The rational functions obtained vary wildly in their structure.
For (4), the result is a constant function, which is trivial to analyse, while for (3), we obtained rational functions with roughly one million terms, rendering further evaluation expensive.

\section{Conclusion}
\label{sec:conclusion}

This paper connects two active research areas, namely verification and synthesis
for POMDPs and parameter synthesis for Markov models. We see benefits for both areas.
On the one hand, the rich application area for POMDPs in, \eg\ robotics, yields
new challenging benchmarks for parameter synthesis and can drive the development of
more efficient methods. On the other hand, parameter synthesis tools and techniques
extend the state-of-the-art approaches for POMDP analysis. Future work will also concern
a thorough investigation of \emph{permissive schedulers}, that correspond to regions of
parameter instantiations, in concrete motion planning scenarios.

\bibliographystyle{splncs03}
\balance
\begin{small}
\bibliography{abbrev_short,literature}
\end{small}

\end{document}